\newtheorem{theorem}{Theorem}
\newtheorem{proposition}{Proposition}
\newtheorem{lemma}{Lemma}
\newtheorem{corollary}{Corollary}
\newtheorem{remark}{Remark}
\newtheorem{example}{Example}
\newtheorem{assumption}{Assumption}
\newtheorem*{problem}{Problem}
\newcommand{\amax}{a_{\max}}
\newcommand{\amin}{a_{\min}}
\newcommand{\interior}{\operatorname{int}}
\newcommand{\norm}[1]{\| #1 \|}
\newcommand{\Vf}{\mathcal{V}_{\textup{f}}}
\newcommand{\Vp}{\mathcal{V}_{\textup{p}}}
\newcommand{\V}{\mathcal{V}}
\DeclareMathOperator{\diag}{diag}
\newcommand{\until}[1]{\{1,\dots, #1\}}
\newcommand{\setdef}[2]{\{#1 \; | \; #2\}}
\newcommand{\map}[3]{#1: #2 \rightarrow #3}
\newcommand{\intersection}{\ensuremath{\operatorname{\cap}}}
\newcommand{\adj}{\textup{adj}}
\newcommand\oprocendsymbol{\hbox{$\square$}}
\newcommand\oprocend{\relax\ifmmode\else\unskip\hfill\fi\oprocendsymbol}
\DeclareSymbolFont{bbold}{U}{bbold}{m}{n}
\DeclareSymbolFontAlphabet{\mathbbold}{bbold}
\newcommand{\db}[1]{\bm{[}#1\bm{]}}
\begin{document}
\title{Dynamical models for distributed social power perception in Friedkin-Johnsen influence networks}
\author{Ye Tian, Angela Fontan, Yu Kawano, Wei Zhang, Kenji Kashima, and Karl H. Johansson
\thanks{This work was partially supported by 
  the JSPS KAKENHI (Grant Nos. JP21H04875 and JP21K14185), the JST ASPIRE Program (Grant No. JPMJAP2402), the Wallenberg AI, Autonomous Systems and Software Program (WASP) and the Wallenberg Scholar Grant funded by the Knut and Alice Wallenberg Foundation, and the Swedish Research Council Distinguished Professor Grant.}
\thanks{Y. Tian and K. Kashima are with the Graduate School of Informatics, Kyoto University, Kyoto 606-8501, Japan (e-mail: tinybeta7.1@gmail.com; kk@i.kyoto-u.ac.jp).}
\thanks{A. Fontan and K. H. Johansson are with the Division of Decision and Control Systems, School of Electrical Engineering and Computer Science, KTH Royal Institute of Technology, and with Digital Futures, Stockholm 100 44, Sweden (e-mail: \{angfon; kallej\}@kth.se). }
\thanks{Y. Kawano is with the Graduate School of Advanced Science and Engineering, Hiroshima University, Higashi-Hiroshima 739-8527, Japan (e-mail: ykawano@hiroshima-u.ac.jp).}
\thanks{W. Zhang is with the Department of Mechanical and Energy Engineering, Southern University of Science and Technology, Shenzhen 518055, China (e-mail: zhangw3@sustech.edu.cn).}}

\maketitle

\begin{abstract}
Social power quantifies the ability of individuals to influence others and plays a central role in social influence networks. Yet, computing social power typically requires global knowledge and significant computational or storage capability, especially in large-scale networks with stubborn individuals. In this paper, we propose a distributed perception mechanism based on the Friedkin-Johnsen opinion dynamics that enables individuals to estimate their true social power through local interactions. The mechanism starts from independent initial perceptions and relies only on local information: each individual only needs to know its neighbors' stubbornness and the influence weights they accord. We provide rigorous dynamical system analysis that characterizes equilibria, invariant sets, and convergence. Conditions are established for convergence to the true social power in both the static setting with fixed influence weights and the reflected-appraisal setting where influence weights coevolve with perceptions. The proposed mechanism remains reliable under extreme initial perceptions, disconnected influence networks, reflected-appraisal coupling, and variations in timescales. Numerical examples illustrate our results. 
\end{abstract}

% Our investigation suggests that the perception mechanism is robust to whether individuals reflectively appraise themselves using their perceived social power and whether individuals interact with one another over a single issue or over issue sequences. The perception mechanism is also robust to the presence of irrational individuals who perceive their social power to be negative or extremely large. 

\begin{IEEEkeywords}
Perceived social power, opinion dynamics, social networks, distributed algorithms
\end{IEEEkeywords}

\section{Introduction}\label{s1}
Social power refers to the ability of an individual to control the thoughts, feelings, or behaviors of others~\cite{JRPF:56,DC:59}, and has been identified with influence, competence, knowledge, and authority~\cite{RB:50}. In influence system theory and opinion dynamics, social power plays a crucial role in shaping opinion outcomes, restructuring interpersonal influence, and reflecting the level of collective intelligence~\cite{YT-LW:23}.

Models of opinion dynamics describe how individuals' opinions evolve under interpersonal influence,
including the DeGroot model~\cite{MHDG:74}, the Friedkin-Johnsen (FJ) model~\cite{NEF-ECJ:99}, and various extensions~\cite{AVP-RT:18,YT-LW:23}.   
A mathematical formalization of social power was proposed in~\cite{NEF:11} based on the FJ model, where an individual's social power is defined as the normalized total contribution of their initial opinion to the final opinions of all individuals. This definition was later extended to the general framework of weighted-average opinion dynamics~\cite{YT-LW-FB:23}. In this framework, social power is a probability vector whose components  
capture the contributions of individuals' initial opinions to the average of their final opinions, providing a measure of influence centrality. 

Social power is critical in determining whether an influence system is wise~\cite{BG-MOJ:10, FB-FF-BF:20}, and whether local interactions improve or undermine collective wisdom~\cite{YT-LW-FB:23}. It also underlies a variety of higher-level dynamics. For instance, in social power games~\cite{10506633,CB-LW-MF-CA:23}, an agent's social power serves as their utility. The evolution of social power has been studied based on the psychological theory of \emph{reflected appraisal}~\cite{NEF-PJ-FB:16}, most notably through the DeGroot-Friedkin (DF) model~\cite{PJ-AM-NEF-FB:13d} and its extensions~\cite{MY-JL-BDOA-CY-TB:17b,YT-PJ-AM-LW-NEF-FB:22}. In these models, individuals update their self-weights over successive rounds of opinion formation according to their manifested social power. 

An implicit assumption in much of the existing literature is that individuals have accurate knowledge of their social power~\cite{JK-JO-AS-PBG:25}. However, social power is a latent quantity whose mathematical definition typically requires global knowledge of \emph{all} model parameters, even to compute the power of a single individual~\cite{YT-LW-FB:23}. This challenge is further exacerbated in the presence of stubborn individuals~\cite{YT-PJ-AM-LW-NEF-FB:22}, where additional computational or storage resources are required. These requirements make it particularly challenging to compute social power in large-scale networks. How individuals come to know their social power remains an open and important question.

Empirical evidence suggests that humans possess neurocognitive capacities to perceive observable properties of their social environments~\cite{MG-WBB-JD-SLF-FK-HO-DP-DLS-TD:21}. In~\cite{AZ-ARC-ES:59}, social power is termed \emph{perceived relative power}, defined as the ability individuals perceive themselves to have to influence others. This notion of perceived social power is conceptually distinct from \emph{true social power}: 
the latter represents the actual influence an individual exerts on others, whereas perceived social power reflects an internal estimate of this influence. While these studies suggest that individuals can autonomously estimate their social power, rigorous mathematical models explaining how such perception can arise from local interactions remain limited~\cite{JK-JO-AS-PBG:25}, especially in the presence of stubborn individuals.

A brief discussion of distributed social power perception in the DeGroot framework appears in~\cite{PJ-AM-NEF-FB:13d}, where successful perception relies on consensus-type network connectivity and normalized initialization. Related ideas are employed in~\cite{PJ-NEF-FB:14e,XC-JL-MAB-ZX-TB:16} to study single-timescale variants of the DF model and distributed evaluation of self-appraisals under the reflected-appraisal theory. Although not explicitly concerned with social power perception, these models can be interpreted as perception dynamics with reflected appraisals: individuals update their self-appraisals using local variables instead of their true social power.
However, their scope and theoretical guarantees are limited. In particular, it is unclear in~\cite{XC-JL-MAB-ZX-TB:16} whether the local variables converge to the true social power equilibrium, while~\cite{PJ-NEF-FB:14e} guarantees such convergence only for irreducible networks. Moreover, these models assume the initial perception vector lies in the probability simplex, which requires globally coordinated initialization. Importantly, 
the underlying opinion dynamics in these works follows the DeGroot model, where social power depends only on the influence network. 

In contrast, social power in the FJ model depends on both 
the influence network and individuals' stubbornness, which makes its computation more involved. Empirical studies suggest that the FJ model more accurately captures human opinion formation~\cite{NEF-ECJ:99}. Yet, how stubborn individuals perceive their social power in the FJ model remains unexplored.

\subsubsection*{Contribution}
In this paper, we propose a distributed perception dynamics that allows individuals to estimate their true social power in the FJ model. The proposed dynamics evolves perceived rather than true social power, thus does not modify the underlying opinion dynamics or the evolution of true social power. Instead, it operates on a separate layer and can be viewed as a \emph{distributed observer} for the latent true social power. During the perception process, individuals independently initialize their perceptions and update them via local interactions, where  
they need only the stubbornness of and the influence weights accorded by their neighbors. 

We first study a basic setting for 
perceiving the static true social power of the FJ model under fixed influence networks. We prove that the perceived social power converges exponentially to the true social power from arbitrary initial perceptions (Proposition~\ref{P1}). Notably, unlike the DeGroot perception mechanism, 
the proposed perception dynamics preserves neither the probability simplex nor the unit hypercube, and its convergence requires neither normalized initial perceptions nor connectivity assumptions on the influence network. These properties enable fully independent initializations and endow the perception dynamics with strong robustness to extreme initial perceptions and disconnected influence networks.

We then incorporate the perception dynamics with the reflected-appraisal mechanism, yielding a \emph{perceived-reflected-appraisal} dynamics in which individuals update their self-appraisals using perceived social power. 
In the probability simplex, we show that this dynamics shares the same equilibrium set as the evolution of true social power and establish uniqueness of the equilibrium (Proposition~\ref{P2} and Theorem~\ref{T2}). Together with the equivalence of equilibrium sets, this also resolves the uniqueness conjecture for the true social power dynamics. In addition, we derive a necessary condition for the existence of a dominant individual who holds a majority of social power at equilibrium (Proposition~\ref{P3}).
We further characterize two classes of positively invariant hyperrectangles and derive contraction conditions that guarantee convergence to the equilibrium of true social power (Theorem~\ref{T6}). 

Finally, we examine the applicability of our contraction conditions in structured FJ networks, including star topologies and homogeneous stubbornness. We show that the contraction conditions hold in representative FJ networks that exhibit either democratic or maximally nonuniform true social power structures (Corollaries~\ref{C02} and~\ref{C01}). We further exploit these structures to derive sharper convergence results, where weaker conditions guarantee convergence from broader sets of initial perceptions (Theorems~\ref{T3},~\ref{T4}, and~\ref{T5}). 

We believe that our investigation represents the first rigorous study of distributed social power perception in networks of stubborn individuals. Theoretical results show that the proposed dynamics enables effective and efficient perception of social power, and remains reliable under extreme initial perceptions, disconnected networks, reflected-appraisal coupling, and timescale variations. A key ingredient in our perception weights is the \emph{relative stubbornness} term, which fundamentally distinguishes the updates from the DeGroot case. As a result, the perception dynamics can not be obtained from or reduced to existing perception models in the DeGroot framework~\cite{PJ-NEF-FB:14e, XC-JL-MAB-ZX-TB:16}. In particular, the proposed FJ perception mechanism (i)  captures a richer and more realistic opinion formation process; (ii) allows independently generated, possibly extreme initial perceptions and applies to disconnected influence networks; and (iii) remains effective under reflected-appraisal dynamical influence networks, irrespective of network connectivity.

\subsubsection*{Organization}
Section \ref{s2} motivates and formulates the problem. Section \ref{s3} studies distributed perception of static true social power in the FJ model. Its extension with reflected appraisals is analyzed in Section \ref{s4}. Section \ref{s5} investigates the perception dynamics in structured networks. Section \ref{s6} concludes and discusses future directions. 
\subsubsection*{Notation}
Let $\mathbf{1}_{n}$ and $I_{n}$ denote the $n\times 1$ all-ones vector and the $n\times n$ identity matrix, respectively. $\mathbf{e}_{i}$ denotes the $i$-th standard basis vector of the proper dimension. Given $x\in\mathbb{R}^{n}$, $\db{x}=\diag(x)$ denotes a diagonal matrix with diagonal elements $x_{1}, \dots, x_{n}$. The $n$-simplex is denoted by $\Delta_{n}=\setdef{x\in\mathbb{R}^{n}}{x\geq 0, \mathbf{1}^{\top}_{n}x=1}$ and $\interior{\Delta_{n}}$ denotes its interior. For $y,z\in\mathbb{R}^{n}$ with $y\leq z$, denote $\varGamma_{n}(y,z)=\setdef{x\in\mathbb{R}^{n}}{y\leq x\leq z}$ the hyperrectangle and $\interior{\varGamma_{n}(y,z)}$ its interior. A nonnegative matrix is row-stochastic if its row sums are $1$; it is doubly-stochastic if both its row and column sums are $1$. The weighted digraph $\mathcal{G}(W)$ associated with a nonnegative matrix $W$ has node set $\V=\until n$ and a directed edge $(i,j)$ from nodes $i$ to $j$ if and only if $W_{ij}>0$, where $i$ (resp. $j$) is called the in-neighbor (resp. out-neighbor) of $j$ (resp. $i$). $\mathcal{N}_{i}^{+}$ and $\mathcal{N}_{i}^{-}$ denote the sets of in-neighbors and out-neighbors of $i$, respectively. $\mathcal{G}(W)$ is a star topology if all its directed edges are either from or to a (center) node. A directed path $q^{i_{0}i_{m}}$ of length $m$ from nodes $i_{0}$ to $i_{m}$ in $\mathcal{G}(W)$ consists of a finite and ordered sequence of distinct nodes $i_{0}, i_{1}, \dots, i_{m}$ satisfying $W_{i_{l}i_{l+1}}>0$ for all $0\leq l\leq m-1$. With a slight abuse of notation, we denote $q^{i_{0}i_{m}}=(i_{0}, i_{1}, \dots, i_{m})$ and $i_{l}\in q^{i_{0}i_{m}}$ for all $0\leq l\leq m$. Particularly, if $i_{0}=i_{m}$, the directed path is called a cycle of $i_{0}$ and is simply denoted by $q^{i_{0}}$. 

\section{Problem formulation}\label{s2}
This section presents the problem formulation.We begin by reviewing the notion of social power in the FJ model and its evolution under reflected-appraisal theory, which naturally motivates the distributed perception problem.

\subsection{Friedkin-Johnsen opinion dynamics}
Consider $n\geq 2$ individuals forming their opinions in an influence network $\mathcal{G}(W)$, where $W$ is the row-stochastic influence matrix. 
Let $y_{i}(k)\in\mathbb{R}$ denote the opinion of individual $i$ at time $k$. In the FJ model, individuals update their opinions according to
\begin{equation}\label{e2}
y(k+1)=A W y(k)+(I_{n}-A)y(0),
\end{equation}
where $y=(y_{1}, \dots, y_{n})^{\top}$ and $A=\diag(a_{1},\dots, a_{n})$. In~\eqref{e2}, $a_{i}\in[0,1]$ is individual $i$'s susceptibility to social influence, while $1-a_{i}$ represents its stubbornness to initial opinion. We say $i$ is \emph{stubborn} if $a_{i}<1$, and denote by $\Vf$ and $\Vp$ the sets of \emph{fully stubborn} ($a_{i}=0$) and \emph{partially stubborn} ($0<a_{i}<1$) individuals, respectively. 

% If $A=I_{n}$, the FJ model~\eqref{e2} reduces to the DeGroot model. 

\subsection{True social power and its evolution}
The FJ model~\eqref{e2} converges to the limit
\begin{align*}
  \lim_{k\to\infty}y(k)=(I_{n}-AW)^{-1}(I_{n}-A)y(0)
\end{align*}
if and only if $\rho(AW)<1$~\cite[Lemma 2]{YT-LW:18}. Let $V=(I_{n}-AW)^{-1}(I_{n}-A)$, where $V_{ji}$ captures the influence of $i$'s initial opinion on $j$'s final opinion. Accordingly, individuals' \emph{true social power}, denoted by $x=(x_{1}, \dots, x_{n})^{\top}$, is defined as the relative control of their initial opinions on all individuals' final opinions~\cite{NEF:11} and is given by
\begin{equation}\label{e4}
x=\frac{V^{\top}\mathbf{1}_{n}}{n}=(I_{n}-A)(I_{n}-W^{\top}A)^{-1}\frac{\mathbf{1}_{n}}{n}.
\end{equation}

The true social power~\eqref{e4} is the \emph{static} power with fixed influence matrix $W$. A well-established mechanism in influence system theory is reflected appraisal~\cite{NEF-PJ-FB:16}, which posits that individuals update their self-appraisals based on their social power. These evolving self-appraisals then reshape interpersonal influence and drive the evolution of social power~\cite{PJ-AM-NEF-FB:13d,YT-PJ-AM-LW-NEF-FB:22}. 
Let $\gamma\in\mathbb{R}^{n}$ denote the diagonal of $W$, representing individuals' self-appraisals. Then $W$ can be decomposed as $W=[\gamma]+(I_{n}-[\gamma])C$, where $C$ is the row-stochastic and zero-diagonal  relative influence matrix. Consider the FJ model evolving over a sequence of issues $s=0,1,\dots$:
\begin{equation}\label{e2.1}
y(s,k+1)=A W(\gamma(s)) y(s,k)+(I_{n}-A)y(s,0),
\end{equation}
where $y(s,k)$ is the opinion vector on issue $s$ at time $k$. Under the reflected-appraisal mechanism, individuals take their manifested social power over the prior issue as their self-appraisals on the next issue, i.e., $\gamma(s+1)=x(s)$. As a result, individuals' true social power evolves over the issue sequence as follows:
\begin{equation}\label{e5}
x(s+1)=(I_{n}-A)(I_{n}-W^{\top}(x(s))A)^{-1}\frac{\mathbf{1}_{n}}{n}.
\end{equation}
A single-timescale dynamics is also proposed in~\cite{YT-PJ-AM-LW-NEF-FB:22}, where reflected appraisal is incorporated after each opinion update such that social power evolves on a single issue:
\begin{equation}\label{e5.1}
  \begin{cases}
  V(k+1)=AW(x(k))V(k)+I_{n}-A,\\
  x(k+1)=V^{\top}(k+1)\frac{\mathbf{1}_{n}}{n},
  \end{cases}
\end{equation}
with $V(0)=I_{n}$. The equilibria and convergence properties of systems~\eqref{e5} and~\eqref{e5.1} are summarized in Lemma~\ref{L1} in Appendix~\ref{app:aux}.
Note that in~\eqref{e4},~\eqref{e5}, and~\eqref{e5.1}, $x_{i}\equiv 0$ if $a_{i}=1$. 
For simplicity, we make the following assumption. 

\begin{assumption}\label{A1}
The susceptibility $a_{i}<1$ for all $i\in\V$, and $a_{i}>0$ for at least one $i\in\V$.
\end{assumption}

Under Assumption~\ref{A1}, we have $\rho(AW)<1$.

\subsection{Perceived social power and distributed perception}

\begin{figure}[t]
  \centerline{\includegraphics[width=0.9\hsize]{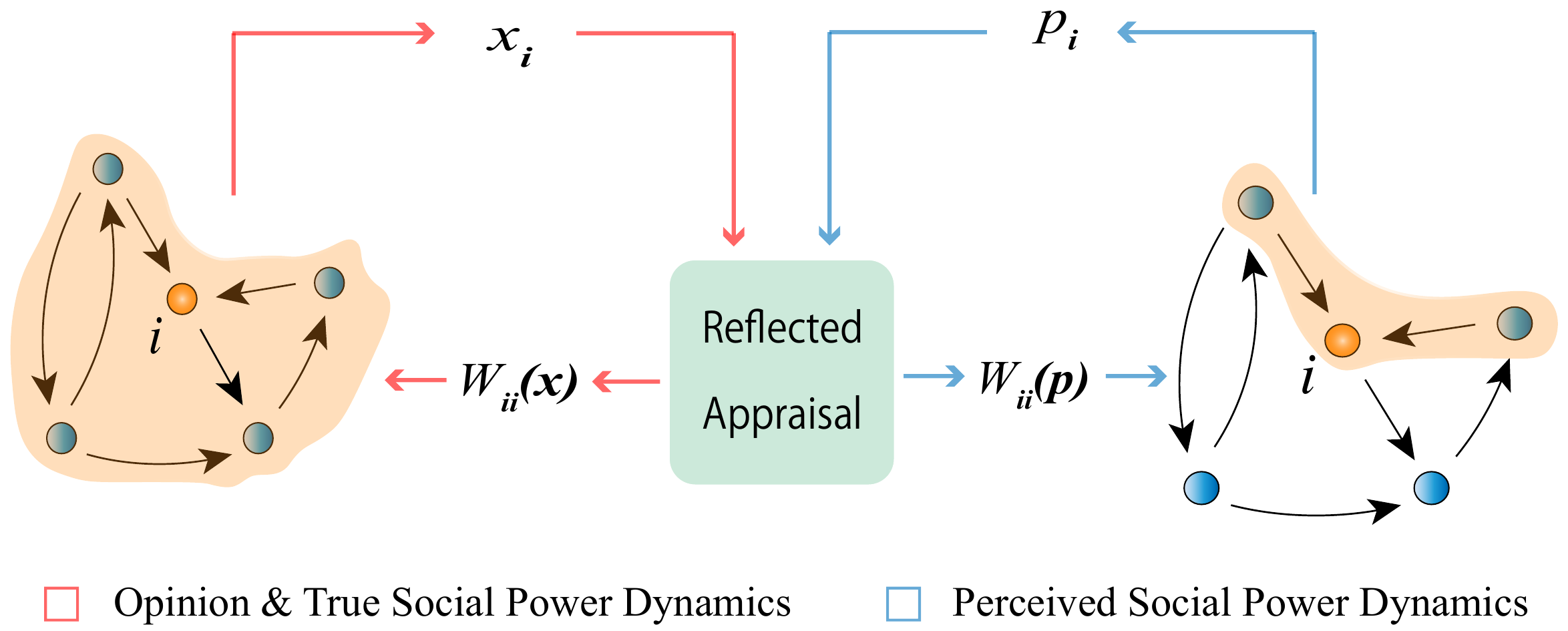}}
  %\centerline{\includegraphics[width=\hsize]{f1}}
  \caption{Dynamics of true and perceived social power under reflected appraisal. The shaded region marks the information required to compute $x_i$ (true social power) or to update $p_{i}$ (perceived social power) in each layer.}
  \vspace{-1em}
  \label{fig0}
  \end{figure} 

The true social power~\eqref{e4} and its evolution in~\eqref{e5} and~\eqref{e5.1} are formulated in strict accordance with the definition $x=V^{\top}\mathbf{1}_{n}/n$. However, computing the true social power of even a single individual requires comprehensive global information, including the entire influence matrix $W$ and all susceptibilities $a_i$. Moreover, in~\eqref{e4} and~\eqref{e5}, evaluating social power entails computing the matrix inverse $(I_{n}-W^{\top}A)^{-1}$.
In~\eqref{e5.1}, additional storage requirements arise, as each individual must maintain a column of $V$, whose update depends on global information of $C$, $A$, and $x$. Overall, these requirements are rarely feasible in practice, particularly in large-scale networks. 

Motivated by these challenges and the notion of perceived social power~\cite{AZ-ARC-ES:59}, we develop a mathematical model for distributed social power perception in the FJ model. As illustrated in Fig.~\ref{fig0}, we first construct a perception layer separate from the opinion-dynamics layer, in which perceived social power evolves to estimate the static true social power using only local information. We then couple this perception layer with the reflected-appraisal mechanism and study whether it can track the equilibrium of the true social power dynamics. The problem of interest is formulated as follows.

\begin{problem}(Distributed social power perception)
Let $x$ denote the true social power of the FJ model defined by~\eqref{e4}, and let $x^{*}$ be the equilibrium of the true social power dynamics~\eqref{e5} and~\eqref{e5.1}. Let $p=(p_1,\dots,p_n)^\top$ denote the vector of perceived social power. We aim to design a distributed perception dynamics with local update rules of the form 
\begin{align}
  p_{i}(k+1)=\mathcal{P}_{i}(\{a_{j},W_{ji}(k),p_{j}(k)\}_ {j\in\mathcal{N}_{i}^{+}}),
\end{align}
such that 
\begin{enumerate}
  \item if $W(k)=W$ for all $k\geq 0$, the perceived social power converges to the static true social power, i.e., $\lim_{k\to\infty}p(k)=x$; \label{pro1}
  % \begin{align*}
  %   \lim_{k\to\infty}p(k)=x;
  % \end{align*}
  \item if individuals update their self-appraisals using their perceived social power, i.e., $W(k)=W(p(k))$, the perceived social power converges to the equilibrium of the true social power dynamics, i.e., $\lim_{k\to\infty}p(k)=x^{*}$. \label{pro2}
  % \begin{align*}
  %   \lim_{k\to\infty}p(k)=x^{*}.
  % \end{align*}
\end{enumerate} 
\end{problem}

%%%%%%%%%%%%%%%%%%%%%%%%%%%%%%%%%%%%%%%%%%%%%%%%%%
%%%%%%%%%%%%%%%%%%%%%%%%%%%%%%%%%%%%%%%%%%%%%%%%%%
%%%%%%%%%%%%%%%%%%%%%%%%%%%%%%%%%%%%%%%%%%%%%%%%%%

\section{Distributed perception of static social power}\label{s3}
We first study distributed perception of the static true social power~\eqref{e4} and address Problem~\ref{pro1}. 
Assume that each individual $i$ knows the susceptibilities of their in-neighbors and the influence weights accorded by them, namely $a_{j}$ and $W_{ji}$ for all $j\in\mathcal{N}_{i}^{+}$. Starting from an arbitrary initial perception $p_{i}(0)\in \mathbb{R}$, individuals update their perceived social power according to the following local rule:
\begin{align}\label{e6}
p_{i}(k+1)=(1-a_{i})\sum_{j=1}^{n}\frac{a_{j}}{1-a_{j}}W_{ji}p_{j}(k)+\frac{1-a_{i}}{n}.
\end{align}

The perception dynamics~\eqref{e6} can be interpreted as a \emph{dynamical flow system}: $i$ updates their perceptions by aggregating that of all their in-neighbors $j\in\mathcal{N}_{i}^{+}$ with \emph{perception weights} $\frac{1-a_{i}}{1-a_{j}}a_{j}W_{ji}$, while the constant term $(1-a_{i})/n$ represents the baseline power $i$ retains by adhering to their initial opinion. 
The perception weight reflects how much $i$ believes their influence on $j$ contributes to their power. Specifically, $a_{j}W_{ji}$ is the effective influence weight $j$ accords to $i$ in the opinion dynamics~\eqref{e2}, while $\frac{1-a_{i}}{1-a_{j}}$ measures the \emph{relative stubbornness} of $i$ compared to $j$. If $1-a_{i}>1-a_{j}$, the perception weight $i$ assigns to $j$ exceeds the effective influence weight $j$ accords to $i$ in opinion formation. In this sense, a more stubborn individual tends to perceive themselves as having greater social power relative to less stubborn neighbors.

This interpretation highlights that the perception dynamics~\eqref{e6} is not a simple reversal of the opinion dynamics; rather, it reweights influence through the asymmetry introduced by relative stubbornness. This perception structure 
is fundamentally distinct from the DeGroot perception dynamics~\cite{PJ-AM-NEF-FB:13d}: 
\begin{align}\label{e8}
  p_{i}(k+1)=\sum_{j=1}^{n}W_{ji}p_{j}(k), 
\end{align} 
where the perception weight $W_{ji}$ coincides exactly with the influence weight in opinion formation. 
By analogy with~\eqref{e8}, a straightforward extension to the FJ setting would be
\begin{align}\label{e8FJ}
p_{i}(k+1)=\sum_{j=1}^{n}a_{j}W_{ji}p_{j}(k)+\frac{1-a_{i}}{n}.
\end{align}
This dynamics removes the relative stubbornness in~\eqref{e6} and reduces to~\eqref{e8} when $A=I_{n}$. However,~\eqref{e8FJ} does \emph{not} converge to the true social power~\eqref{e4}. Therefore,~\eqref{e6} neither reduces to~\eqref{e8} nor can it be obtained by directly incorporating susceptibilities into~\eqref{e8}. Instead, it defines a distinct mechanism with perception weights tuned by relative stubbornness, which is essential for correctly perceiving the FJ social power.

\begin{remark}(Non-invariance of simplex and hypercube) \label{R1}
Unlike the DeGroot perception dynamics~\eqref{e8}, which preserves the probability simplex $\Delta_{n}$, the perception dynamics~\eqref{e6} generally preserves neither $\Delta_{n}$ nor the unit hypercube $[0,1]^n$. Thus, $p(k)$ is not restricted to $\Delta_{n}$ and $[0,1]^n$ in~\eqref{e6}.
\end{remark}
 
For the DeGroot perception dynamics~\eqref{e8}, convergence to the DeGroot social power $\omega\in\mathbb{R}^{n}$ typically relies on consensus-type connectivity assumptions and globally coordinated initializations, i.e., $\lim_{k\to\infty} W^{k}=\mathbf{1}_{n}\omega^{\top}$ and $\mathbf{1}^{\top}_{n}p(0)=1$.
In contrast, the perception dynamics~\eqref{e6} converges to the true FJ social power from any initial perceptions without connectivity assumptions, demonstrating significant robustness to irrational individuals (with negative or excessively large initial perceptions) and disconnected influence networks. 

\begin{proposition}\label{P1}
(Convergence of perception dynamics)
Suppose that Assumption~\ref{A1} holds. Then all trajectories of system~\eqref{e6} starting from $p(0)\in\mathbb{R}^{n}$ converge exponentially to the true social power $x$ of the FJ model given by~\eqref{e4}.
\end{proposition}
\begin{proof}
We write dynamics~\eqref{e6} in the compact form:
\begin{align*}
p(k+1)=(I_{n}-A)W^{\top}A(I_{n}-A)^{-1}p(k)+(I_{n}-A)\frac{\mathbf{1}_{n}}{n}.
\end{align*}
Since $\rho(AW)<1$ under Assumption~\ref{A1}, $I_{n}-(I_{n}-A)W^{\top}A(I_{n}-A)^{-1}$ is non-singular. Hence, we obtain
\begin{align*}
&\lim_{k\to\infty}p(k)\\&=(I_{n}-(I_{n}-A)W^{\top}A(I_{n}-A)^{-1})^{-1}(I_{n}-A)\frac{\mathbf{1}_{n}}{n}\\
&=(I_{n}-A)(I_{n}-W^{\top}A)^{-1}\frac{\mathbf{1}_{n}}{n}=x.
\end{align*}
The exponential convergence rate follows from $\rho((I_{n}-A)W^{\top}A(I_{n}-A)^{-1})<1$. 
\end{proof}

% In system~\eqref{e8}, the perceiving weight $i$ accords to $j$ is $W_{ji}(\gamma)$, which is precisely the total influence weight $j$ accords to $i$ in the DeGroot model. In contrast, this perceiving weight in system~\eqref{e6} equals to the total influence weight tuned by the relative stubbornness, and the constant term $(1-a_{i})/n$ encapsulates the social power $i$ acquires by being stubborn to its initial opinion. Moreover, for system~\eqref{e8}, $\mathbf{1}_{n}^{\top}p(0)=1$ requires global coordination among individuals when establishing their initial perception, whereas system~\eqref{e6} allows each individual forms its initial perception independently. Finally, system~\eqref{e6} does not require any condition on $W(\gamma)$ for convergence.   

\begin{example}(Static perception dynamics: robustness and non-invariance)\label{ex:baseline}
Consider the perception dynamics~\eqref{e6} with $n=15$, a random row-stochastic
influence matrix $W$, and a susceptibility matrix $A$ ensuring Assumption~\ref{A1}.
In Fig.~\ref{fig1}a, with $a_1=0$ and $10$ independent initial perceptions sampled uniformly from $[-3,3]^n$,
the trajectories of $p_i$, $i=1,2,3$, converge to the corresponding true social power $x_{i}$, illustrating robustness to arbitrary initializations.
In Fig.~\ref{fig1}b, even with $p(0)\in\Delta_n$, neither $p(k)\in[0,1]^n$ nor $\sum_i p_i(k)=1$ holds for $k>0$, 
showing that $\Delta_n$ and $[0,1]^n$ are not positively invariant.

\begin{figure}[t]
  \centerline{\includegraphics[width=0.9\hsize]{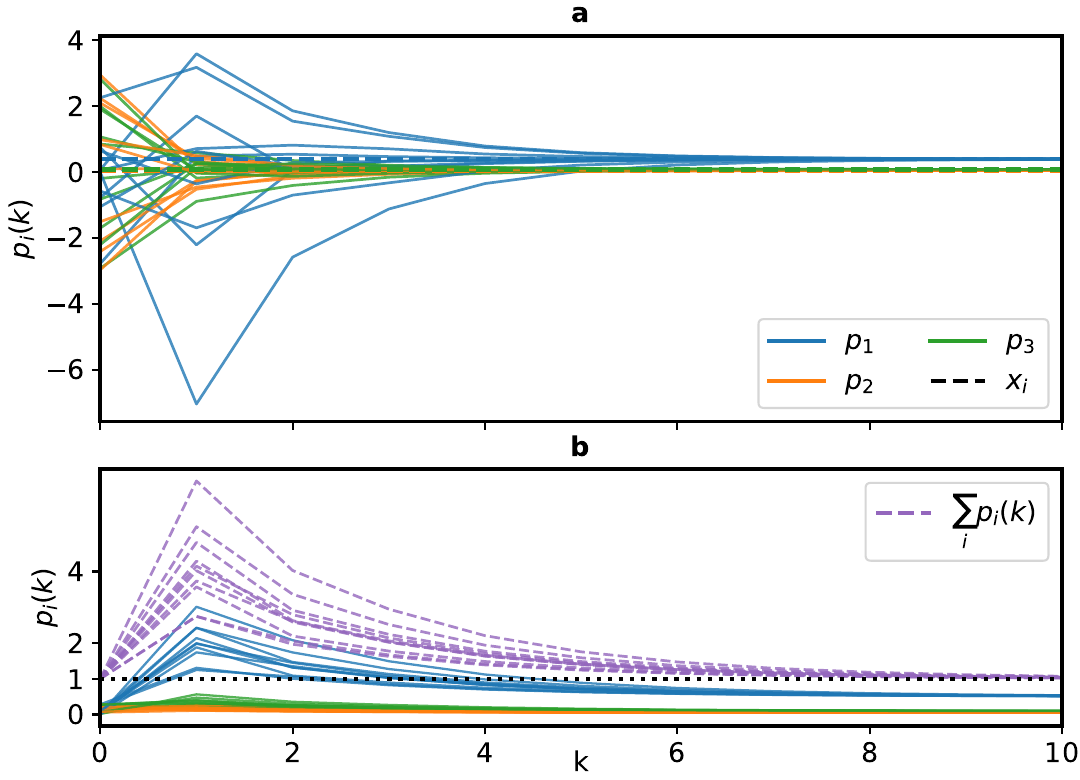}}
  %\centerline{\includegraphics[width=\hsize]{f1}}
  \caption{Trajectories of system~\eqref{e6} under various initializations.}
  \vspace{-1em}
  \label{fig1}
  \end{figure} 
\end{example}

%%%%%%%%%%%%%%%%%%%%%%%%%%%%%%%%%%%%%%%%%%%%%%%%%%%%%%%%%%
%%%%%%%%%%%%%%%%%%%%%%%%%%%%%%%%%%%%%%%%%%%%%%%%%%%%%%%%%%
%%%%%%%%%%%%%%%%%%%%%%%%%%%%%%%%%%%%%%%%%%%%%%%%%%%%%%%%%%

\section{Distributed perception of social power with reflected appraisals}\label{s4}
In Section~\ref{s3}, we studied distributed perception of the static true social power in FJ model under fixed influence networks. In this section, we turn to Problem~\ref{pro2} and investigate whether the perception mechanism remains effective when 
individuals update their self-appraisals using their perceived social power. To this end, we incorporate reflected appraisals into the perception dynamics~\eqref{e6}, as illustrated in Fig.~\ref{fig0}.

\subsection{Distributed perceived-reflected-appraisal dynamics}

Consider the FJ model~\eqref{e2.1} evolving over a sequence of issues. We assume that, on each issue $s$, individuals first update their perceived social power $p(s)$ and then update their self-appraisals using $p(s)$: 
\begin{align}\label{e10}
  W(p(s))=\db{p(s)}+(I_{n}-\db{p(s)})C.
\end{align}
As a result, their perceived social power evolves over issue sequences according to 
\begin{multline}\label{e11}
  p(s+1)=(I_{n}-A)W^{\top}(p(s))A(I_{n}-A)^{-1}p(s)\\+(I_{n}-A)\frac{\mathbf{1}_{n}}{n}. 
\end{multline}

The \emph{perceived-reflected-appraisal} mechanism~\eqref{e11} offers a more realistic explanation of empirically observed reflected-appraisal processes. As shown in Fig.~\ref{fig0}, the true social power dynamics \eqref{e5} operates on the opinion-dynamics layer, which requires global information to compute the true social power. In contrast, dynamics \eqref{e11} evolves on a separate perception layer, where $p$ and $W(p)$ are updated via local interactions. 

We are interested in whether the perceived-reflected-appraisal dynamics~\eqref{e11} correctly perceives the equilibrium of the true social power dynamics~\eqref{e5}.
For the single-timescale case, system~\eqref{e11} can be directly adapted to perceive the true social power governed by \eqref{e5.1} by replacing the timescale $s$ with $k$. Moreover, Lemma~\ref{L1} shows that the true social power dynamics~\eqref{e5} and \eqref{e5.1} converge to the same equilibrium. Therefore, it suffices to focus on the convergence of~\eqref{e11}. 

% \begin{remark}(Timescales)
% Note that systems~\eqref{e5}, \eqref{e5.1} and~\eqref{e11} are all single-timescale dynamics and evolve over either $s$ or $k$. Lemma~\ref{L1} suggests that systems~\eqref{e5} and \eqref{e5.1} have the same social power at equilibrium. Therefore, if system~\eqref{e11} converges to the final social power of system~\eqref{e5}, it equivalently perceives the social power of system~\eqref{e5.1} by adopting the timescale $k$ in place of $s$. For simplicity, we shall use $s$ as the timescale in the following.
% \end{remark}

\subsection{Equilibria: existence, uniqueness and dominance structure}

In this subsection, we analyze the equilibria of dynamics~\eqref{e11}. 
Since true social power is inherently a probability vector, we focus on equilibria of~\eqref{e11} in $\Delta_n$. 

As noted in Remark~\ref{R1}, the underlying perception dynamics~\eqref{e6} preserves neither $\Delta_n$ nor $[0,1]^n$; the same therefore holds for dynamics~\eqref{e11}. This distinguishes it from existing reflected-appraisal dynamics of true social power~\cite{PJ-AM-NEF-FB:13d,YT-PJ-AM-LW-NEF-FB:22} and DeGroot perceived social power~\cite{PJ-NEF-FB:14e,XC-JL-MAB-ZX-TB:16}. 
In particular, the induced influence matrix $W(p(s))$ may contain negative entries when $p_i(s)>1$ for some $i$. As a result, the perception weights are not necessarily nonnegative, and system~\eqref{e11} does not even preserve the nonnegative orthant. Consequently, standard Brouwer-type fixed-point arguments can not be used to establish the existence of equilibria of~\eqref{e11} in $\Delta_n$. 
Despite these challenges, we show that system~\eqref{e11} admits at least one equilibrium in $\Delta_{n}$ by proving that its equilibria coincide with those of the true social power dynamics~\eqref{e5} in $\Delta_{n}$.

\begin{proposition}(Equivalence and existence of equilibria)\label{P2}
Suppose that Assumption~\ref{A1} holds. Then, system~\eqref{e11} shares the same set of equilibria as system~\eqref{e5} and has at least one equilibrium in $\Delta_{n}$. 
\end{proposition}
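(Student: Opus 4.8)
The plan is to avoid constructing an equilibrium of~\eqref{e11} directly and instead exploit an exact algebraic correspondence: I would show that, \emph{within} $\Delta_{n}$, the fixed points of the perception dynamics~\eqref{e11} coincide with those of the social power dynamics~\eqref{e5}, and then invoke Lemma~\ref{L1}, which already guarantees that~\eqref{e5} possesses an equilibrium in $\Delta_{n}$. This transfers the existence result for free.

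First I would write the fixed-point condition for~\eqref{e11}: a point $p^{*}$ is an equilibrium if and only if
\[
p^{*}=(I_{n}-A)W^{\top}(p^{*})A(I_{n}-A)^{-1}p^{*}+(I_{n}-A)\frac{\mathbf{1}_{n}}{n}.
\]
Under Assumption~\ref{A1} we have $\mathbf{a}<\mathbf{1}_{n}$, so $I_{n}-A$ is nonsingular and the substitution $q=(I_{n}-A)^{-1}p^{*}$ is well defined. Moving the $p^{*}$-terms to the left, factoring out the common prefactor $(I_{n}-A)$ and cancelling it, the condition collapses to $(I_{n}-W^{\top}(p^{*})A)\,q=\mathbf{1}_{n}/n$, a purely linear identity in $q$.

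The next step is to restrict to $p^{*}\in\Delta_{n}$, which is what makes the remaining inverse legitimate. For $p^{*}\in\Delta_{n}\subset[0,1]^{n}$ the matrix $W(p^{*})=\db{p^{*}}+(I_{n}-\db{p^{*}})C$ is row-stochastic, so the spectral bound $\rho(AW(p^{*}))<1$ recalled before~\eqref{e4} under Assumption~\ref{A1} applies and renders $I_{n}-W^{\top}(p^{*})A$ invertible. Solving for $q$ and substituting back $p^{*}=(I_{n}-A)q$ yields precisely
\[
p^{*}=(I_{n}-A)(I_{n}-W^{\top}(p^{*})A)^{-1}\frac{\mathbf{1}_{n}}{n},
\]
which is the fixed-point equation of~\eqref{e5}. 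Since each manipulation uses an invertible matrix, the implications reverse, so a point of $\Delta_{n}$ solves the equilibrium equation of~\eqref{e11} if and only if it solves that of~\eqref{e5}. Lemma~\ref{L1} furnishes an equilibrium of~\eqref{e5} in $\Delta_{n}$, and that same point is therefore an equilibrium of~\eqref{e11}, which is the claim.

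I expect the linear algebra itself to be routine; the only genuinely delicate point is that the equivalence is valid \emph{only} on $\Delta_{n}$. The invertibility of $I_{n}-W^{\top}(p^{*})A$ hinges on $W(p^{*})$ being row-stochastic, which in turn requires $p^{*}\in[0,1]^{n}$; off the simplex this spectral bound can fail, and so can the cancellation argument. Consequently the transfer of existence from~\eqref{e5} to~\eqml{e11} must be confined to $\Delta_{n}$, exactly as the statement asserts, and I would be careful to flag this restriction when invoking the spectral bound.
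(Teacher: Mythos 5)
Your proposal is correct and takes essentially the same route as the paper: show that within $\Delta_{n}$ the fixed-point equation of~\eqref{e11} rearranges into the fixed-point equation of~\eqref{e5}, so the two systems share their equilibria in $\Delta_{n}$, and then invoke Lemma~\ref{L1} to transfer existence. Your write-up additionally makes explicit a detail the paper leaves implicit, namely that the reversibility of the rearrangement relies on the invertibility of $I_{n}-W^{\top}(p^{*})A$, which follows from $\rho(AW(p^{*}))<1$ precisely because $p^{*}\in\Delta_{n}$ makes $W(p^{*})$ row-stochastic.
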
 
\begin{proof}
Assume that $p^{*}\in\Delta_{n}$ is an equilibrium of system~\eqref{e11}. Then, $p^{*}$ satisfies 
\begin{equation*}
p^{*}=(I_{n}-A)W^{\top}(p^{*})A(I_{n}-A)^{-1}p^{*}+\frac{I_{n}-A}{n}\mathbf{1}_{n},
\end{equation*}
that is 
\begin{equation}\label{e3.1}
p^{*}=(I_{n}-A)(I_{n}-W^{\top}(p^{*})A)^{-1}\frac{\mathbf{1}_{n}}{n},
\end{equation}
which means systems~\eqref{e11} and~\eqref{e5} have the same equilibria in $\Delta_{n}$. Hence, existence directly follows from Lemma~\ref{L1}. 
\end{proof}

% Define $\amax=\max_{i} a_{i}$, $\amin=\min_{i} a_{i}$ and $\aave=(1/n)\sum_{j=1}^{n}a_{j}$. Let $\bar{\zeta}=\aave+(1-\amin)/n$ and $\underline{\zeta}=\aave+(1-\amax)/n$. 
% By Proposition~\ref{P2} and Lemma~\ref{L1}~\ref{L1-3}, $p^{*}$ is unique if $\amax<1/(1+2\bar{\zeta})$. This condition is further relaxed to $\amax<1/(1+2\underline{\zeta})$ in~\cite{11187004}.  

Proposition~\ref{P2} shows that system~\eqref{e11} preserves the equilibria of the true social power dynamics~\eqref{e5} within $\Delta_n$. By Proposition~\ref{P2}, system~\eqref{e11} admits a unique equilibrium in $\Delta_{n}$ if the condition in Lemma~\ref{L1}~\ref{L1-3} holds. However, Monte Carlo validation suggests that this condition is not necessary~\cite[Conjecture 1]{YT-PJ-AM-LW-NEF-FB:22}.
To go beyond this limitation, we introduce the notions of \emph{partially stubborn paths} and \emph{partially stubborn cycles}, which allow us to capture the interplay between the relative influence network $\mathcal{G}(C)$ and the mapping $\Phi(z)=(I_{n}-AW(z))^{-1}$; see Appendix~\ref{app:aux} for details. 
Exploiting these structural properties, we establish the uniqueness of the equilibrium for system~\eqref{e11} in $\Delta_{n}$. 
 
\begin{theorem}\label{T2}(Uniqueness of equilibrium)
Suppose that Assumption~\ref{A1} holds. 
% Let $\mathcal{U}=\varGamma_{n}(\mathbf{0}_{n},\nu) $ with 
% \begin{equation*}
%   \nu_{i}=\begin{cases}
%   1 & a_{i}\leq \frac{2}{3},\\
%   \frac{3-2a_i+\sqrt{12a_{i}^{2}-20a_{i}+9}}{4a_i} & a_{i}>\frac{2}{3}.
%   \end{cases}
% \end{equation*}
Then, system~\eqref{e11} has a unique equilibrium $p^{*}$ in $\Delta_{n}$ satisfying $p^{*}\in\interior\Delta_{n}$.
\end{theorem}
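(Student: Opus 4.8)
The plan is to phrase everything through the social power map
\[
F(x)=(I_{n}-A)(I_{n}-W^{\top}(x)A)^{-1}\tfrac{\mathbf{1}_{n}}{n},
\]
whose fixed points in $\Delta_{n}$ are, by the manipulation~\eqref{e3.1} in the proof of Proposition~\ref{P2}, exactly the equilibria of~\eqref{e11} in $\Delta_{n}$ (equivalently, those of~\eqref{e5} and~\eqref{e5.1}). Existence of such a point is already in hand, so it remains to show that (a) every fixed point of $F$ in $\Delta_{n}$ lies in $\interior\Delta_{n}$, and (b) $F$ has at most one fixed point in $\Delta_{n}$.

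Step (a) is immediate. For $x\in\Delta_{n}\subset[0,1]^{n}$ the matrix $W(x)$ is row-stochastic, so $\rho(AW(x))<1$ under Assumption~\ref{A1} and the Neumann series gives $(I_{n}-W^{\top}(x)A)^{-1}=\sum_{k\geq0}(W^{\top}(x)A)^{k}\geq I_{n}$ entrywise. Multiplying by the strictly positive diagonal $I_{n}-A$ then yields $[F(x)]_{i}\geq (1-a_{i})/n>0$ for all $i$, so every fixed point is strictly positive and hence lies in $\interior\Delta_{n}$.

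Step (b) is the heart of the matter and the main obstacle. I would take two fixed points $p,q\in\interior\Delta_{n}$ and aim at $p=q$. Writing $W(x)=C+\db{x}(I_{n}-C)$ gives the affine identity $W(p)-W(q)=\db{p-q}(I_{n}-C)$, and feeding this through the resolvent identity converts $p-q=F(p)-F(q)$ into the self-consistency relation
\[
p-q=(I_{n}-A)\,\Phi^{\top}(p)\,(I_{n}-C^{\top})\,\db{v}\,(p-q),\qquad v=A(I_{n}-A)^{-1}q,
\]
where $\Phi(p)=(I_{n}-AW(p))^{-1}$. The diagonal $\db{v}$ is crucial: since $v_{i}=\tfrac{a_{i}}{1-a_{i}}q_{i}$, it is supported precisely on the partially stubborn nodes, so the coefficient matrix only transmits $p-q$ through partially stubborn nodes and along the walks that $\Phi^{\top}(p)$ records. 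Using also $\mathbf{1}_{n}^{\top}(p-q)=0$, the claim reduces to showing that this coefficient matrix cannot have $1$ as an eigenvalue on the zero-sum hyperplane.

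This is where the combinatorial machinery of Definition~\ref{D2} and Lemmas~\ref{L2},~\ref{L3} enters, and where the genuine difficulty lies. The obstacle is twofold: the factor $I_{n}-C^{\top}$ makes the coefficient matrix sign-indefinite, so a naive Perron--Frobenius bound on its spectral radius is unavailable; and $p-q$ appears both explicitly and, through $\Phi^{\top}(p)$, implicitly. I would resolve this by expanding $\Phi(p)$ as a power series in $AW(p)$, interpreting the $(i,j)$ entry as a sum over walks in $\mathcal{G}(C)$ whose interior nodes are partially stubborn---fully stubborn nodes annihilate a walk through the factor $a$---which is exactly what the partially stubborn paths and cycles of Definition~\ref{D2} formalize. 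Lemmas~\ref{L2} and~\ref{L3} then supply the quantitative control: each partially stubborn cycle carries a strictly sub-unit gain because every partially stubborn node contributes a factor $a_{i}<1$, so no nonzero zero-sum vector can be reproduced by the coefficient matrix. This removes the sufficient condition of Lemma~\ref{L1}\ref{L1-3} and forces $p-q=0$; combined with Step (a), the unique equilibrium $x^{*}$ lies in $\interior\Delta_{n}$.
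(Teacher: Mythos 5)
Your reduction is sound as far as it goes: Step (a) is correct, and the self-consistency relation $p-q=(I_{n}-A)\Phi^{\top}(p)(I_{n}-C^{\top})\db{v}(p-q)$ with $v=A(I_{n}-A)^{-1}q$ does follow from the resolvent identity, the affine identity $W(p)-W(q)=\db{p-q}(I_{n}-C)$, and the fixed-point property of $q$. But the proof stops exactly where the theorem lives. The entire content of uniqueness is the claim that this coefficient matrix cannot fix a nonzero zero-sum vector, and for that you offer only the assertion that the walk expansion of $\Phi(p)$ together with Lemmas~\ref{L2} and~\ref{L3} ``supplies the quantitative control'' because every partially stubborn cycle carries a strictly sub-unit gain. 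That mechanism is not valid as stated: the diagonal weights in your relation are $v_{i}=\frac{a_{i}}{1-a_{i}}q_{i}$, which blow up as $a_{i}\to 1$ (at an equilibrium one only has $q_{i}\geq(1-a_{i})/n$, so $v_{i}$ is not uniformly bounded), hence there is no uniform sub-unit bound on cycle gains; and you yourself note that the factor $I_{n}-C^{\top}$ makes the matrix sign-indefinite, so no Perron--Frobenius bound applies. Nothing in the sketch overcomes either obstacle---these two difficulties are precisely why the paper's argument must use equilibrium-specific estimates rather than generic spectral or cycle-gain bounds.

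For comparison, the paper avoids your asymmetric resolvent form altogether: it uses the exact factorization $\db{p^{*}}p^{*}-\db{\hat{p}^{*}}\hat{p}^{*}=\db{p^{*}+\hat{p}^{*}}(p^{*}-\hat{p}^{*})$, so the difference of the two fixed-point equations collapses to $(I_{n}-W^{\top}(p^{*}+\hat{p}^{*})A)(I_{n}-A)^{-1}(p^{*}-\hat{p}^{*})=0$, and uniqueness becomes nonsingularity of $I_{n}-AW(p^{*}+\hat{p}^{*})$ evaluated at the \emph{sum} of the two equilibria. This buys two things your formulation does not. First, when $p^{*}+\hat{p}^{*}\leq\mathbf{1}_{n}$ the matrix $W(p^{*}+\hat{p}^{*})$ is row-stochastic, so $\rho(AW(p^{*}+\hat{p}^{*}))<1$ and nonsingularity is immediate; no analogous easy case is visible in your form. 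Second, since both equilibria lie in $\interior\Delta_{n}$, at most one coordinate of the sum can exceed $1$, so the hard case is a rank-one perturbation $I_{n}-AW(z)-a_{1}\mathbf{e}_{1}(\mathbf{e}_{1}-C^{\top}\mathbf{e}_{1})^{\top}$ with $z=p^{*}+\hat{p}^{*}-\mathbf{e}_{1}\in\interior\Delta_{n}$, whose determinant is controlled by Cauchy's formula (Lemma~\ref{L00}); everything then reduces to the scalar inequality $(\mathbf{e}_{1}-C^{\top}\mathbf{e}_{1})^{\top}\Phi(z)\mathbf{e}_{1}<1/a_{1}$. Only at that point do the PSP/PSC lemmas enter, and they are used with equilibrium-specific bounds (Lemma~\ref{L2}~\ref{L2-5}: absence of a PSC of node $1$ forces $p_{1}^{*}<1/(2a_{1})$; Lemma~\ref{L3} with Bernoulli's inequality, Lemma~\ref{L01}, for the case $a_{1}z_{1}>1-a_{1}$), not with a generic cycle-gain bound. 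To complete your version you would need estimates of comparable strength for $(I_{n}-A)\Phi^{\top}(p)(I_{n}-C^{\top})\db{v}$ restricted to the zero-sum hyperplane, and none are given.
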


Theorem \ref{T2} is proved in Appendix~\ref{app:T2}. 
From Proposition~\ref{P2} and Lemma~\ref{L1}~\ref{L1-1}, Theorem~\ref{T2} also implies that systems~\eqref{e5} and~\eqref{e5.1} admit a unique equilibrium in $\Delta_{n}$, thereby resolving the equilibrium uniqueness conjecture of systems~\eqref{e5} and~\eqref{e5.1} in~\cite{YT-PJ-AM-LW-NEF-FB:22}. 
We hereinafter denote by $p^{*}\in\interior\Delta_{n}$ the unique equilibrium of system~\eqref{e11} in $\Delta_{n}$. Note that if $p_i^*> 1/2$, then $p_i^*>p_j^{*}$ for all $j\neq i$ and $p_i^{*}>\sum_{j\neq i}p_j^*$. Hence, $i$ dominates the group by holding a majority of the power that surpasses the combined power of all others. Next, we provide a necessary condition for the existence of such a dominant individual.

% Hereinafter, we let $p^{*}\in\Delta_n$ denote an equilibrium of system~\eqref{e11}, equivalently an equilibrium social power of~\eqref{e5}. Let $\aave=(1/n)\sum_{j=1}^{n}a_{j}$. 
% By Lemma~\ref{L1}, $p^{*}\in\interior\Delta_{n}$ with $p^{*}_{i}\in ((1-a_{i})/n, \aave+(1-a_{i})/n)$. 
% Moreover,  

\begin{proposition}\label{P3}(Existence of a dominant individual)
Suppose that Assumption~\ref{A1} holds, and $p^{*}$ is the unique equilibrium of system~\eqref{e11} in $\Delta_{n}$. Then, 
for any $\sigma  \in [1/2,1)$, $i$ is dominant with $p_{i}^{*}>\sigma$ only if 
\begin{align}\label{edi}
\sum_{j=1}^{n}C_{ji}\frac{a_{j}}{1-a_{j}}>\frac{a_{i}}{1-a_{i}}+\frac{n\sigma-1}{n\sigma(1-\sigma)}.
\end{align}
\end{proposition}

Proposition~\ref{P3} is proved in Appendix~\ref{app:T2}. It shows that an individual $i$ can attain dominant social power only if the weighted aggregate of the \emph{susceptibility-stubbornness ratios} of their in-neighbors sufficiently exceed $i$'s own susceptibility-stubbornness ratio. The term  
$\frac{n\sigma-1}{n\sigma(1-\sigma)}$ further quantifies how increasingly demanding this condition becomes as the dominance level $\sigma$ increases. 

% \yktodo{Do we need Proposition~\ref{P2} here? Maybe, we can put this into the Appendix. Where do we use this?  }

% Next, we study the existence of a dominant individual, defined below.

% \begin{definition}\label{D1}(Dominant individual)
% An individual is dominant if its social power is strictly lager than $\frac{1}{2}$ at the equilibrium. 
% \end{definition}

% Hence, $\sum_{j=1}^{n}C_{ji}\frac{a_{j}}{1-a_{j}}>\frac{a_{i}}{1-a_{i}}+\frac{n\sigma-1}{n\sigma(1-\sigma)}$.

% \yttodo{It would be better if we can derive some sharp upper bounds for $x^{*}$; the upper bound provided in Lemma~\ref{L1}~\ref{L1-2-3} depends only on the susceptibilities and is not intuitive enough.} 

\subsection{Invariant sets and convergence}
We have shown that the equilibrium of system~\eqref{e11} coincides with that of true social power dynamics~\eqref{e5} and is unique, i.e., $p^{*}=x^{*}$ in $\Delta_{n}$. 
In this subsection, we analyze the invariance and convergence of system~\eqref{e11}. We identify two classes of positively invariant sets, each characterized by a hyperrectangular structure in $\mathbb{R}^{n}$. On these invariant hyperrectangles, we establish contraction properties of system~\eqref{e11}, which ensure convergence of the perceived social power to $p^{*}$. 

% \begin{theorem}\label{T6}
% Suppose that Assumption~\ref{A1} holds. Define  
% \begin{align*}
%   b_{i}=\sum_{j\in\Vp}C_{ji}\frac{a_{j}}{1-a_{j}}, 
%   \quad d_{i}=\sum_{j\in\Vp}C_{ji}\frac{1+3a_{j}}{4a_{j}}, 
%   \end{align*}
% for all $i\in\Vp$ and let $\mu, \nu \in\mathbb{R}^{n}$. For system~\eqref{e11},
% \begin{enumerate}
%   \item \label{T6-1} let $\mathcal{H}=\varGamma_{n} (\mathbf{0}_{n},\nu)$ with $\nu_{i}\geq 1/n+b_{i}/4$ for $i\in\Vf$ and $\nu_{i}=1/2$ for $i\in\Vp$, then $\mathcal{H}$ is positively invariant, and all trajectories starting from $\mathcal{H}$ exponentially converge to $p^*$ if 
%   \begin{align}\label{equ:15}
%     b_{i}\leq \frac{a_{i}}{1-a_{i}}+\frac{2(n-2)}{n}, \quad \forall i\in\Vp.
%     \end{align} 
  
%   \item \label{T6-2} Let $\mathcal{M}=\varGamma_{n} (\mu,\nu)$ with 
%   \begin{align*}
%     &\mu_{i}\leq\frac{1}{n}-\frac{d_{i}}{4}, \quad
%     \nu_{i}\geq \frac{1}{n}+\frac{b_{i}}{4}, \; \forall  i\in\Vf,\\
%     & \mu_{i}=-\frac{1-a_{i}}{4a_{i}}, \quad \nu_{i}=\frac{1+a_{i}}{4a_{i}},\; \forall  i\in\Vp,
%     \end{align*}
% then $\mathcal{M}$ is positively invariant, and all trajectories starting from $\mathcal{M}$ exponentially converge to $p^*$ if additionally 
%   \begin{align}\label{equ:16}
%     d_{i} <\frac{1}{a_{i}}+\frac{4}{n}, \quad \forall i\in\Vp. 
%     \end{align} 
% \end{enumerate} 
% \end{theorem}
\begin{theorem}\label{T6}
Suppose that Assumption~\ref{A1} holds. Define  
\begin{align*}
  b_{i}=\sum_{j\in\Vp}C_{ji}\frac{a_{j}}{1-a_{j}}, 
  \quad d_{i}=\sum_{j\in\Vp}C_{ji}\frac{1+3a_{j}}{4a_{j}}, 
  \end{align*}
for all $i\in\Vp$ and let $\mu, \nu \in\mathbb{R}^{n}$. For system~\eqref{e11},
\begin{enumerate}
  \item \label{T6-1} let $\mathcal{H}=\varGamma_{n} (\mathbf{0}_{n},\nu)$ with $\nu_{i}\geq 1/n+b_{i}/4$ for $i\in\Vf$ and $\nu_{i}=1/2$ for $i\in\Vp$. If 
  \begin{align}\label{equ:15}
    b_{i}\leq \frac{a_{i}}{1-a_{i}}+\frac{2(n-2)}{n}, \quad \forall i\in\Vp,
  \end{align} 
  then $\mathcal{H}$ is positively invariant, and all trajectories starting from $p(0)$ satisfying $p(0)\geq 0$ and $p_{i}(0)\leq 1/2$ for $i\in\Vp$ exponentially converge to $p^*$;
  
  \item \label{T6-2} let $\mathcal{M}=\varGamma_{n} (\mu,\nu)$ with 
  \begin{align*}
    &\mu_{i}\leq\frac{1}{n}-\frac{d_{i}}{4}, \quad
    \nu_{i}\geq \frac{1}{n}+\frac{b_{i}}{4}, \; \forall  i\in\Vf,\\
    & \mu_{i}=-\frac{1-a_{i}}{4a_{i}}, \quad \nu_{i}=\frac{1+a_{i}}{4a_{i}},\; \forall  i\in\Vp.
    \end{align*}
If~\eqref{equ:15} holds, and   
  \begin{align}\label{equ:16}
    d_{i} <\frac{1}{a_{i}}+\frac{4}{n}, \quad \forall i\in\Vp,
    \end{align} 
then $\mathcal{M}$ is positively invariant, and all trajectories starting from $p(0)$ satisfying 
  \begin{align}\label{Mini}
    -\frac{1-a_{i}}{4a_{i}}\leq p_{i}(0)\leq \frac{1+a_{i}}{4a_{i}}, \quad  \forall i\in\Vp,
  \end{align} 
  exponentially converge to $p^*$.
\end{enumerate} 
\end{theorem}

Theorem~\ref{T6} is proved in Appendix~\ref{app:T6}. Since system~\eqref{e11} does not preserve $\Delta_{n}$, invariance-based convergence analysis on $\Delta_{n}$ is not applicable. Instead, Theorem~\ref{T6} identifies two classes of positively invariant hyperrectangles in $\mathbb{R}^{n}$ and establishes contraction results on these sets. In class $\mathcal{H}$, perceptions remain positive: perceptions of partially stubborn individuals are uniformly upper-bounded by $1/2$, whereas perceptions of fully stubborn individuals can be arbitrarily large. In contrast, class $\mathcal{M}$ allows negative perceptions, and these perceptions can be arbitrarily small or large as $a_{i}$ approaches or equals $0$. Since $-\frac{1-a_{i}}{4a_{i}}<0$ and $\frac{1+a_{i}}{4a_{i}}>1/2$, $\mathcal{M}$ allows a broader range of initial perceptions than $\mathcal{H}$.
Theorem~\ref{T6} demonstrates that the underlying perception mechanism~\eqref{e6} remains effective when coupled with reflected appraisal, and that the resulting perceived-reflected-appraisal dynamics~\eqref{e11} retains robustness to extreme initial perceptions. Moreover, in class $\mathcal{M}$ the time-varying influence matrix $W(p(s))$ may have negative entries; the contraction property implies that $W(p(s))$ becomes nonnegative after a finite transient.

Roughly speaking, the contraction condition~\eqref{equ:15} (resp. \eqref{equ:16}) requires a partially stubborn individual to be \emph{less} (resp. \emph{more}) stubborn when it is assigned excessively
large influence by those who are less (resp. more) stubborn. Moreover, by Proposition~\ref{P3},~\eqref{equ:15} precludes partially stubborn individuals from being dominant at equilibrium. 
We illustrate these observations in the following example.

\begin{example}
(Convergence in general case) 
Consider system~\eqref{e11} with $n=3$, $C=[0 \ 0.6 \ 0.4;0 \ 0 \ 1;0.5 \ 0.5 \ 0]$ and $a_1=0$.  Fig.~\ref{fig4}a shows the set of pairs $(a_2,a_3)$ for which $(A,C)$ satisfies~\eqref{equ:15} and~\eqref{equ:16}. This illustrates that the contraction conditions hold for a nontrivial range of susceptibilities. 
Next, consider system~\eqref{e11} with $n=6$. We generate random $C$ and $A$ satisfying~\eqref{equ:15} and~\eqref{equ:16}. 
Fig.~\ref{fig4}b depicts the trajectories of the perceived social power
$p_i(s)$ for $i\in\{4,5,6\}$, where $a_4\approx0.76$, $a_5\approx0.40$, and $a_6\approx0.19$, from $5$ different initial perceptions $p(0)$ satisfying~\eqref{Mini}. The dashed curves show the trajectories of true social power evolution generated by~\eqref{e5} with $x(0)\in\Delta_n$. As shown in Fig.~\ref{fig4}b, 
all sampled trajectories of~\eqref{e11} converge to the equilibrium of true social power dynamics~\eqref{e5}. In particular, for $a_6\approx 0.19$, the admissible interval for $p_{6}(0)$ is approximately $[-1.1, 1.57]$.

\begin{figure}[!t]
  %\centerline{\includegraphics[height=4cm,width=8cm]{fig4}}
    \centerline{\includegraphics[width=\hsize]{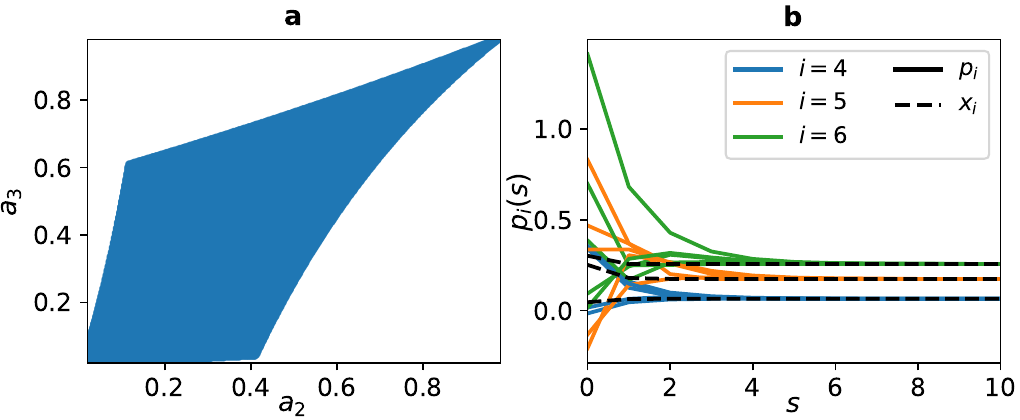}}
    \caption{(a) Feasible region of $(a_2,a_3)$ satisfying~\eqref{equ:15} and \eqref{equ:16} for $n=3$, $a_1=0$, and the given matrix $C$. (b) Trajectories of system~\eqref{e11} from $p(0)$ satisfying ~\eqref{Mini} and system~\eqref{e5} from $x(0)\in\Delta_n$.}
    \vspace{-1em}
    \label{fig4}
    \end{figure}  
\end{example}

By \cite[Theorem 2]{YT-PJ-AM-LW-NEF-FB:22} and Proposition~\ref{P2}, $p^*=\mathbf{1}_{n}/n$, known as the democratic social power, if and only if $A(I_n-A)^{-1}\mathbf{1}_n$ is a left eigenvector of $C$ associated with eigenvalue $1$. This condition is equivalent to $b_i=a_{i}/(1-a_{i})$ for all $i\in\V$, and satisfies \eqref{equ:15}. 
Thus, we obtain the following corollary.  

\begin{corollary}(Convergence to democracy)\label{C0}
Suppose that Assumption~\ref{A1} holds. If $A(I_n-A)^{-1}\mathbf{1}_n$ is a left eigenvector of $C$ associated with eigenvalue $1$, then all trajectories of system~\eqref{e11} starting from $p(0)\in\mathbb{R}^{n}$ satisfying $p(0)\geq 0$ and $p_{i}(0)\leq 1/2$ for $i\in\Vp$ exponentially converge to $\mathbf{1}_n/n$. 
\end{corollary}

In fact, both \eqref{equ:15} and \eqref{equ:16} are satisfied in FJ networks exhibiting either a democratic or a maximally nonuniform true social power structure.
In the next section, we illustrate the applicability of these conditions in structured networks, including star topology and homogeneous stubbornness. 

\section{Convergence in structured networks: Star topology and homogeneous stubbornness}\label{s5}
In influence system theory, star topology captures autocratic influence structure and is of particular interest in the study of social power~\cite{PJ-AM-NEF-FB:13d,YT-PJ-AM-LW-NEF-FB:22}. Moreover, system~\eqref{e11} reduces to a PageRank iteration under homogeneous stubbornness, connecting our model to the classical PageRank algorithm~\cite{DFG:15}. In this section, we verify conditions~\eqref{equ:15} and \eqref{equ:16} for such structured networks and exploit their structure to derive sharper convergence characterizations beyond Theorem~\ref{T6}.

\subsection{Star topology}
We first consider the case where the center node is fully stubborn. In this setting, the true social power at equilibrium exhibits a maximally nonuniform structure~\cite{YT-PJ-AM-LW-NEF-FB:22}, in the sense that  all non-center nodes attain their respective lower bounds while the center node reaches its maximum. Observe that conditions~\eqref{equ:15} and~\eqref{equ:16} hold in this setting. Therefore, Theorem~\ref{T6} yields the following corollary. 
\begin{corollary}(Star topology with fully stubborn center node)\label{C02}
  Suppose that Assumption~\ref{A1} holds and $\mathcal{G}(C)$ is a star topology with a fully stubborn center node. Then, all trajectories of system~\eqref{e11} starting from $p(0)$ satisfying~\eqref{Mini}
  exponentially converge to $p^{*}$.
\end{corollary}

In the next theorem, we exploit the star structure to enlarge the admissible set of initial perceptions beyond that in~\eqref{Mini}.

\begin{theorem}(Star topology with fully stubborn center node)\label{T3}
Suppose that Assumption~\ref{A1} holds and $\mathcal{G}(C)$ is a star topology with a fully stubborn center node. For system \eqref{e11}, 
\begin{enumerate} 
    \item 
    all trajectories starting from $p(0)$ 
    converge to $p^{*}$ if and only if $|p_{i}(0)|<\xi_i$ for all $i\in\Vp$, where $\xi_i=\frac{1+\sqrt{1-4a_{i}(1-a_{i})/n}}{2a_{i}}$; \label{T3-1}
    \item $p^{*}$ is locally exponentially stable.\label{T3-2} 
\end{enumerate}
\end{theorem}
% let $\beta\in (0,1/2)$ be a sufficiently small constant, then all trajectories starting from $p(0)$ satisfying $|p_{i}(0)|<\frac{1}{2a_i}-\beta$ for $i\in\Vp$ exponentially converge to $p^{*}$. \label{T3-2}

Theorem \ref{T3} is proved in Appendix~\ref{app:T3}. Compared with Corollary~\ref{C02}, it provides an exact characterization of the 
admissible initial perceptions which is less restrictive than~\eqref{Mini}.
In particular, if $|p_{i}(0)|>\xi_{i}$ for some $i\in\Vp$, system \eqref{e11} diverges. However, because 
the center node may exhibit a slow initial transient, we do not obtain a
uniform global exponential rate. Exponential convergence can only be guaranteed after a finite transient or on a more restrictive set of initial
conditions, as shown in Theorem~\ref{T3}\ref{T3-2} and Corollary~\ref{C02}. 

We now study the case where the center node is partially stubborn. Assume that node $1$ is the center node. We focus on the setting where $C_{1j}=0$ for all $j\in\Vp$, under which $p^{*}$ admits an explicit form. Even in this restricted setting, conditions~\eqref{equ:15} and~\eqref{equ:16} are in general not satisfied. 

\begin{theorem}(Star topologies with partially stubborn center node)\label{T4}
Suppose that Assumption~\ref{A1} holds, $\mathcal{G}(C)$ is a star topology with partially stubborn center node $1$, and $C_{1j}=0$ for all $j\in\Vp$. For system \eqref{e11}, if 
\begin{align}\label{equ:17}
  \sum_{j\in\Vp\setminus\{1\}}\frac{a_{j}}{1-a_{j}}\leq\frac{1}{a_{1}(1-a_{1})}-\frac{4}{n},
\end{align} 
\begin{enumerate}
    \item then all trajectories starting from $p(0)$ satisfying 
    \begin{align}\label{equ:17.1}
     |p_{1}(0)|<\frac{1}{2a_{1}} \; \textup{and} \; p_{i}(0)\in [0,1],  \forall i\in \Vp\setminus\{1\},
    \end{align}
     converge to $p^{*}$; \label{T4-2}
    \item $p^{*}$ is exponentially stable. \label{T4-3}
\end{enumerate}
\end{theorem}

Theorem \ref{T4} is proved in Appendix~\ref{app:T3}. Compared with~\eqref{equ:15}, condition~\eqref{equ:17} is milder, while the corresponding admissible set of initial perceptions~\eqref{equ:17.1} is strictly
larger than $\mathcal{H}$.

\subsection{Homogeneous stubbornness and reflected-appraisal PageRank}
Finally, we consider the case of homogeneous stubbornness, i.e., $a_i=a\in(0,1)$ for all $i\in\V$. In this case, system~\eqref{e11} reduces to
\begin{equation}\label{ehomo}
 p(s+1)=a\,W^{\top}(p(s))\,p(s)+\frac{1-a}{n}\mathbf{1}_{n},
\end{equation}
which can be viewed as an extension of the PageRank algorithm~\cite{DFG:15,HI-RT:14}. Unlike the classical PageRank iteration, the link matrix $W(p)$ in~\eqref{ehomo} coevolves with the PageRank vector $p$. In particular, a web page may augment or diminish its outbound links in response to
changes in its perceived importance. This mechanism better reflects the dynamic nature of interactions among web pages. 

For system~\eqref{ehomo}, if the relative influence matrix $C$ is doubly-stochastic, then both conditions~\eqref{equ:15} and~\eqref{equ:16} are satisfied. Hence, Theorem~\ref{T6} yields the following corollary.

\begin{corollary}(Homogeneous stubbornness with doubly-stochastic influence network)\label{C01}
Suppose that $C$ is doubly-stochastic. Then, all trajectories of system~\eqref{ehomo} starting from $p(0)$ satisfying~\eqref{Mini} exponentially converge to $\mathbf{1}_n/n$. 
\end{corollary}

Corollaries~\ref{C02} and~\ref{C01} identify two structured settings in which both conditions~\eqref{equ:15} and~\eqref{equ:16} hold. These cases represent two opposite extremes of social power configurations, namely, the maximally nonuniform and the democratic structures. The corollaries therefore illustrate the applicability of the contraction conditions in Theorem~\ref{T6} to meaningful influence networks spanning both extremes, and show that the proposed dynamics~\eqref{e11} can perceive the true social power in both regimes. Next, we show that system~\eqref{ehomo} preserves the simplex $\Delta_{n}$ and establish exponential convergence on $\Delta_{n}$. 
\begin{theorem}(Homogeneous stubbornness)\label{T5}
For system~\eqref{ehomo},
\begin{enumerate}
\item $\Delta_{n}$ is positively invariant;\label{T5-1}
\item \label{T5-2} all trajectories starting from $p(0)\in \Delta_{n}$ exponentially converge to $p^{*}$ if
\begin{align}\label{equ:19}
a\leq \frac{n-4+\sqrt{17n^{2}-24n+16}}{8(n-1)}.
\end{align} 
\end{enumerate}
\end{theorem}

Theorem~\ref{T5} shows that under homogeneous stubbornness, the perception dynamics preserves $\Delta_n$ and converges exponentially from $\Delta_n$, with a condition independent of the relative influence matrix $C$.

\begin{example}\label{exa3}
(Convergence in structured networks) Consider system~\eqref{e11} with $n=6$. Fig.~\ref{fig3} shows trajectories of the selected nodes $i=1,2$ from $5$ independent initial perceptions under different structured networks. 
In Fig.~\ref{fig3}a, $C$ represents a star topology with $a_1=0$ and $C_{i1}=1$ for all $i\neq 1$. In Fig.~\ref{fig3}b, we choose a random $A$ satisfying~\eqref{equ:17} with $a_1=0.4$ and set $C_{1i}=0$ for all $i$ with $a_i>0$. In both cases, the trajectories converge to the same equilibrium from initial perceptions that go beyond the admissible set given by~\eqref{Mini}. For instance, in Fig.~\ref{fig3}a, with $a_2=0.3$, Theorem~\ref{T3} allows $p_2(0)\in (-3.2,3.2)$, whereas~\eqref{Mini} restricts $p_2(0)\in[-0.58,1.08]$. Figs.~\ref{fig3}a and~\ref{fig3}b also exhibit a slow initial transient, especially for the center node. These observations are consistent with the less restrictive initial conditions and local exponential convergence established in Theorems~\ref{T3} and~\ref{T4}. 
In Fig.~\ref{fig3}c, we set $A=aI_{n}$ with $a=0.6$ and choose a random $C$ such that $\mathcal{G}(C)$ has no globally reachable node. Fig.~\ref{fig3}c shows that this lack of connectivity does not preclude exponential convergence. 
\begin{figure}[!t]
    %\centerline{\includegraphics[height=4cm,width=7.5cm]{f2}}
      \centerline{\includegraphics[width=\hsize]{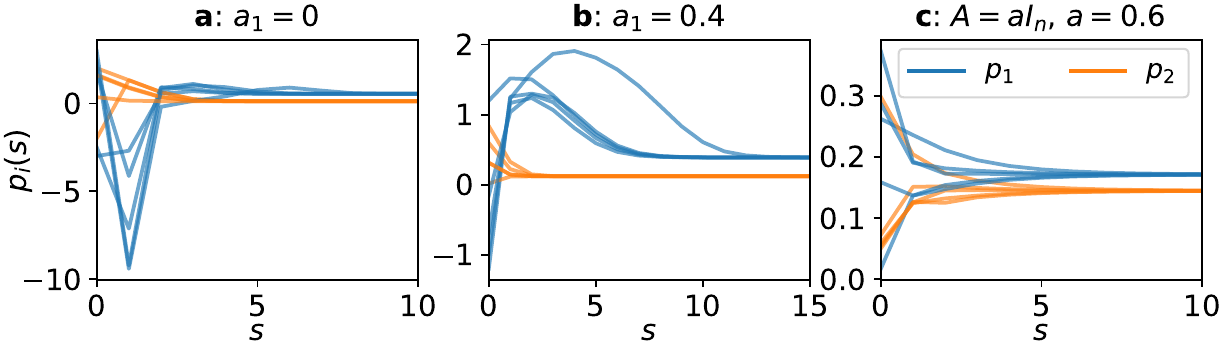}}
      \caption{Trajectories of system~\eqref{e11} under: (a) star topology with fully stubborn center node; (b) star topology with partially stubborn center node; (c) homogeneous stubbornness.}
      \vspace{-1em}
      \label{fig3}
      \end{figure}  
\end{example}

\section{Conclusion}\label{s6}
This paper has investigated distributed perception of social power in groups of stubborn individuals. We proposed a 
distributed perception mechanism based on the FJ model, which enables individuals to estimate their true social power using only local information. We analyzed the 
equilibria, invariant sets and convergence properties of the resulting dynamics in both fixed and reflected-appraisal influence networks. We established conditions under which the perceived social power converges to the true social power.

Our theoretical results show that the proposed dynamics can perceive individuals' social power, effectively and efficiently, under appropriate conditions. The perception dynamics captures that true social power is globally defined and costly to compute, whereas individuals typically only have local information. Particularly, the underlying perception mechanism remains reliable under extreme initial perceptions, disconnected influence networks, reflected-appraisal coevolution, and timescale variations. 
Future work will focus on developing data-driven mechanism of social power perception based on individuals' opinions.

\appendices
\section{Auxiliary lemmas}\label{app:aux}
\setcounter{lemma}{0}  
\renewcommand\thelemma{A.\arabic{lemma}} 
\newtheorem{adefinition}[lemma]{Definition}

\begin{lemma}\label{L1}
(Equilibria and convergence of systems~\eqref{e5} and~\eqref{e5.1}, \cite{YT-PJ-AM-LW-NEF-FB:22})
Suppose that Assumption~\ref{A1} holds. Let $\amax=\max_{i}a_{i}$, $\amin=\min_{i}a_{i}$, and $\zeta =(\sum_{j=1}^{n}a_{j}+1-\amin)/n$. Then, 
\begin{enumerate}
\item systems~\eqref{e5} and~\eqref{e5.1} share the same equilibria in $\Delta_{n}$ and have at least one equilibrium $x^{*}\in\interior\Delta_{n}$; \label{L1-1}
\item if $\amax<1/(1+2\zeta)$ (resp. $\amax<1/2$), then all trajectories of system~\eqref{e5} (resp. system~\eqref{e5.1}) starting from $x(0)\in\Delta_{n}$ exponentially converges to the unique equilibrium $x^{*}\in\interior\Delta_{n}$; \label{L1-3}
% \item \label{L1-4} if $\mathcal{G}(C)$ is a star topology with fully stubborn center node $1$, then all trajectories of systems~\eqref{e5} and~\eqref{e5.1} starting from $x(0)\in\Delta_{n}$ converge to the unique equilibrium satisfying $x^{*}_{i}=1/n$ for $i\in \Vf\setminus\{1\}$ and 
% \begin{align*}
%   \begin{cases}
%     x^{*}_{1}=\frac{1}{n}+\frac{1}{n}\sum_{j\in\Vp}\frac{a_{j}(1-x^{*}_{j})}{1-a_{j}x^{*}_{j}},\\
%     x^{*}_{i}=\frac{1-\sqrt{1-4a_{i}(1-a_{i})/n}}{2a_{i}}, & i\in\Vp;  
%   \end{cases}
% \end{align*}
% \item \label{L1-5} if $\mathcal{G}(C)$ is a star topology with partially stubborn center node $1$, $C_{1j}=0$ for $j\in\Vp$ and $\sum_{j\in\Vp\setminus \{1\}}a_{j}\leq (4n-5)/5$ (resp. $a_{\max}<1/2$), then all trajectories of system~\eqref{e5} (resp. system~\eqref{e5.1}) starting from $x(0)\in\Delta_{n}$ converge to the unique equilibrium satisfying 
% \begin{align*}
%   \begin{cases}
%   x_{1}^{*}=\frac{1-\sqrt{1-\frac{4a_{1}(1-a_{1})}{n}(\mid\Vp\mid-n\sum_{j\in\Vp\setminus\{1\}}x_{j}^{*})}}{2a_{1}},\\
%   x^{*}_{i}=\frac{1-\sqrt{1-4a_{i}(1-a_{i})/n}}{2a_{i}}, \quad \quad \quad \quad \quad \quad i\in\Vp\setminus\{1\},  \\
%   x^{*}_{i}=\frac{1}{n}+C_{1j}(\frac{\mid\Vp\mid}{n}-\sum_{j\in\Vp}x_{j}^{*}), \quad \quad \quad \quad i\in\Vf.
% \end{cases}
% \end{align*}
\end{enumerate}
\end{lemma}
% and 
% \begin{enumerate}
% \item $x^{*}_{i}\geq1/n$ for all $a_{i}\in \Vf$, and $x^{*}_{i}=1/n$ if and only if $C_{ji}=0$ for all $j\in \Vp$;
%     \item $x^{*}_{i}\geq\frac{1-\sqrt{1-4a_{i}(1-a_{i})/n}}{2
%     a_{i}}$ for all $i\in \Vp$, and the equation holds if and only if $C_{ji}=0$ for all $j\in \Vp$;
%     \item $\max_{i}x_{i}^{*}<\zeta$ with $\zeta=\frac{\sum_{j=1}^{n}a_{j}+1-\amin}{n}$; \label{L1-2-3}
% \end{enumerate}

We have the following definitions for directed paths and cycles consisting of only partially stubborn individuals.  

\begin{adefinition}\label{D2}(Partially stubborn path and partially stubborn cycle)
For a directed path $q^{ij}$ from $i$ to $j$ in $\mathcal{G}(C)$, 
\begin{enumerate}
  \item we call $q^{ij}$ a partially stubborn path (PSP) if it consists of only partially stubborn individuals, i.e., $q^{ij}\setminus\{i,j\}\subset\Vp$;
  \item if $i=j$ for a PSP $q^{ij}$, we call $q^{i}=q^{ij}$ a partially stubborn cycle (PSC);
  \item the weight of a directed path $q^{ij}=(i,l_{1},\dots,l_{r},j)$ with respect to $C$ is defined by $C_{q^{ij}}=C_{il_{1}}C_{l_{1}l_{2}}\dots C_{l_{r}j}$.
  % \begin{align*}
  %   C_{q^{ij}}=C_{il_{1}}C_{l_{1}l_{2}}\dots C_{l_{r}j}.
  % \end{align*}
\end{enumerate}
\end{adefinition}

Define $\map{\Phi}{[0,1]^{n}}{\mathbb{R}^{n \times n}}$ by $\Phi(z)=(I_{n}-AW(z))^{-1}$. 
The following lemmas capture the properties of $\Phi(z)$.

\begin{lemma}\label{L2}(Properties of $\Phi$)
Suppose that $z\in\interior\Delta_{n}$, and $p^{*}\in\interior\Delta_{n}$ is an equilibrium of system \eqref{e11}, then
\begin{enumerate}
\item $\Phi(z)$ is non-negative with positive diagonal entries, and $\Phi(z)(I_{n}-A)$ is row-stochastic; \label{L2-1}
\item for any partially stubborn individuals $i,j\in\Vp$, $\Phi_{ij}(z)>0$ if and only if there is a PSP $q^{ij}$ in $\mathcal{G}(C)$; \label{L2-2}
\item $\sum_{j=1}^{n}C_{ij}\Phi_{ji}(z)>0$ if and only if there is a PSC $q^{i}$ in $\mathcal{G}(C)$;\label{L2-3}
\item $\Phi_{ii}(z)>\Phi_{ji}(z)$ for any $i\neq j$; \label{L2-4}
\item if there is no PSC of $i\in\Vp$ in $\mathcal{G}(C)$, then $p_{i}^{*}<1/(2a_i)$. \label{L2-5}
\end{enumerate}
\end{lemma}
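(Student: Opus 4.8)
The organizing tool I would use throughout is the Neumann series
\[
\Phi(x)=(I_{n}-AW(x))^{-1}=\sum_{k\geq 0}(AW(x))^{k},
\]
valid because $AW(x)\geq 0$ and, under Assumption~\ref{A1}, $\rho(AW(x))<1$ for $x\in[0,1]^{n}$. The series lets me read off positivity of entries of $\Phi(x)$ from weighted walks in $\mathcal{G}(AW(x))$. I first record the edge set of that digraph: since $x\in\interior\Delta_{n}$ forces $x_{k}\in(0,1)$ and $W_{kl}(x)=x_{k}\delta_{kl}+(1-x_{k})C_{kl}$, the entry $(AW(x))_{kl}=a_{k}W_{kl}(x)$ is positive exactly when $a_{k}>0$ (equivalently $k\in\Vp$) and either $k=l$ or $C_{kl}>0$. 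Thus \emph{every} edge of $\mathcal{G}(AW(x))$ issues from a partially stubborn node, so after deleting self-loops and repeated nodes a positive walk from $a$ to $b$ is precisely a directed path of $\mathcal{G}(C)$ whose interior nodes lie in $\Vp$, i.e.\ a PSP. This dictionary drives parts \ref{L2-2} and \ref{L2-3}.

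For \ref{L2-1}, nonnegativity and $\Phi_{ii}(x)\geq 1$ are immediate from the series (the $k=0$ term is $I_{n}$). Row-stochasticity of $\Phi(x)(I_{n}-A)$ follows from $(I_{n}-AW(x))\mathbf{1}_{n}=(I_{n}-A)\mathbf{1}_{n}$, which holds because $W(x)$ is row-stochastic so $W(x)\mathbf{1}_{n}=\mathbf{1}_{n}$; multiplying by $\Phi(x)$ gives $\Phi(x)(I_{n}-A)\mathbf{1}_{n}=\mathbf{1}_{n}$. Part \ref{L2-2} is then the dictionary: $\Phi_{ij}(x)>0$ iff a positive walk joins $i$ to $j$ iff a PSP $q^{ij}$ exists. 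For \ref{L2-3}, $\sum_{l}C_{il}\Phi_{li}(x)>0$ iff some out-neighbour $l$ of $i$ admits a positive walk back to $i$; concatenating the edge $(i,l)$ with that walk is exactly a PSC of $i$. For \ref{L2-4} I use that the $j$-th row sum of $M:=AW(x)$ equals $a_{j}<1$ (row-stochasticity of $W(x)$). Writing $c:=\max_{k}\Phi_{ki}(x)$ and using $\Phi e_{i}=e_{i}+M\Phi e_{i}$, I get $\Phi_{ji}\leq[j=i]+a_{j}c$; if the maximum were at some $j\neq i$ then $c\leq a_{j}c<c$, a contradiction, so $c=\Phi_{ii}$ and hence $\Phi_{ji}\leq a_{j}\Phi_{ii}<\Phi_{ii}$ for $j\neq i$.

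Part \ref{L2-5} is the crux, and it combines the two previous facts with an elementary comparison. Here I take $p^{*}\in\interior\Delta_{n}$ (such an equilibrium exists by Lemma~\ref{L1}~\ref{L1-1}). Reading the diagonal of $\Phi e_{i}=e_{i}+M\Phi e_{i}$ at $x=p^{*}$ with $W_{ik}(p^{*})=p_{i}^{*}\delta_{ik}+(1-p_{i}^{*})C_{ik}$ yields
\[
\Phi_{ii}(p^{*})(1-a_{i}p_{i}^{*})=1+a_{i}(1-p_{i}^{*})\sum_{k}C_{ik}\Phi_{ki}(p^{*}).
\]
If $i$ has no PSC, part \ref{L2-3} forces the sum to vanish, so $\Phi_{ii}(p^{*})=1/(1-a_{i}p_{i}^{*})$ (in particular $a_{i}p_{i}^{*}<1$). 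The case $a_{i}\leq 1/2$ is trivial since then $1/(2a_{i})\geq 1>p_{i}^{*}$; so I assume $a_{i}>1/2$. From the equilibrium identity~\eqref{e3.1}, $p_{i}^{*}=\tfrac{1-a_{i}}{n}\sum_{j}\Phi_{ji}(p^{*})$, and part \ref{L2-4} gives $\sum_{j}\Phi_{ji}(p^{*})<n\Phi_{ii}(p^{*})$, whence $p_{i}^{*}<(1-a_{i})\Phi_{ii}(p^{*})=\tfrac{1-a_{i}}{1-a_{i}p_{i}^{*}}$, i.e.\ $a_{i}p_{i}^{*}(1-a_{i}p_{i}^{*})<a_{i}(1-a_{i})$. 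Setting $u:=a_{i}p_{i}^{*}$, the parabola $u\mapsto u(1-u)$ meets the level $a_{i}(1-a_{i})$ at $u\in\{1-a_{i},a_{i}\}$ and lies strictly above it in between; since $u<a_{i}$ (as $p_{i}^{*}<1$) while $u(1-u)<a_{i}(1-a_{i})$, I conclude $u<1-a_{i}<1/2$, hence $p_{i}^{*}<1/(2a_{i})$.

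\textbf{Main obstacle.} The substantive point is \ref{L2-5}: a single equilibrium equation only pins $p_{i}^{*}$ down to one of two roots of a quadratic, and naively the larger root is not excluded by $p_{i}^{*}<1$ alone. The resolution I would stress is that the no-PSC hypothesis (through \ref{L2-3}) collapses the diagonal of $\Phi$ to the clean form $1/(1-a_{i}p_{i}^{*})$, after which the column-maximality of the diagonal (part \ref{L2-4}) upgrades the equilibrium identity~\eqref{e3.1} into the sharp inequality $p_{i}^{*}(1-a_{i}p_{i}^{*})<1-a_{i}$, and the $a_{i}>1/2$ reduction turns this into the stated bound by the parabola comparison. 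Getting these three ingredients to interlock is the only delicate part; parts \ref{L2-1}–\ref{L2-4} are routine once the walk dictionary is set up.
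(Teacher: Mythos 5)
Your proposal is correct and follows essentially the same route as the paper: part \ref{L2-4} via the same column-maximum contradiction using the row sums $a_j<1$, and part \ref{L2-5} by the identical chain (no PSC $\Rightarrow$ $\sum_k C_{ik}\Phi_{ki}(p^*)=0$ $\Rightarrow$ $\Phi_{ii}(p^*)=1/(1-a_ip_i^*)$, then \eqref{e3.1} with \ref{L2-4} gives $p_i^*(1-a_ip_i^*)<1-a_i$, and the $a_i\le 1/2$ case is handled trivially). The only differences are cosmetic: you prove \ref{L2-1}--\ref{L2-2} self-containedly via the Neumann series where the paper cites \cite[Property 2 and Lemma 3]{YT-LW:18}, and you make explicit the parabola comparison that the paper compresses into the assertion $a_ip_i^*<1-a_i<1/2$.
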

\begin{proof}
The proofs of \ref{L2-1} and \ref{L2-2} can be found in \cite[Property 2 and Lemma 3]{YT-LW:18}, respectively. \ref{L2-3} is directly implied by~\ref{L2-2}. 

\emph{Proof of~\ref{L2-4}:} Given $i$, let $\Phi_{ji}(x)=\max_{l\neq i}\Phi_{li}(x)$. It is enough to prove $\Phi_{ji}(x)<\Phi_{ii}(x)$. By~\ref{L2-1}, $\Phi_{ii}(x)>0$.  
Suppose that $\Phi_{ji}(x)\geq\Phi_{ii}(x)$. Since $(I_{n}-AW(x))\Phi(x)=I_{n}$ and $W(x)$ is row-stochastic, we obtain 
\begin{align*}
\Phi_{ji}(x)=a_{j}\sum_{l=1}^{n}W_{jl}(x)\Phi_{li}(x)\leq a_{j}\Phi_{ji}(x),
\end{align*}
which contradicts $a_{j}<1$. Therefore, $\Phi_{ii}(x)>\max_{l\neq i}\Phi_{li}(x)$.

\emph{Proof of~\ref{L2-5}:} Since~\ref{L2-5} is trival if $a_{i}\leq 1/2$, we assume $a_{i}> 1/2$. From \ref{L2-1} and \ref{L2-3} we have $\sum_{j=1}^{n}C_{ij}\Phi_{ji}(p^{*})=0$. Moreover, \eqref{e3.1} implies 
\begin{align*}
  p_{i}^{*}=\frac{1-a_{i}}{n}\sum_{j=1}^{n}\Phi_{ji}(p^{*})<(1-a_{i})\Phi_{ii}(p^{*})=\frac{1-a_{i}}{1-a_{i}p_{i}^{*}},
\end{align*}
where the inequality is implied by statement \ref{L2-4}, and the last equality follows from $(I_{n}-AW(p^{*}))\Phi(p^{*})=I_{n}$ and $\sum_{j=1}^{n}C_{ij}\Phi_{ji}(p^{*})=0$. Hence, 
\begin{align*}
  p_{i}^{*}(1-a_{i})+a_{i}p_{i}^{*}(1-p_{i}^{*})=p_{i}^{*}(1-a_{i}p_{i}^{*})<1-a_{i}, 
\end{align*}
which yields $a_{i}p_{i}^{*}<1-a_{i}<1/2$. Thus, $p_{i}^{*}<1/(2a_{i})$.
\end{proof}

\begin{lemma}(Connections between $\Phi$ and PSCs)\label{L3}
  Suppose that node $i$ has PSCs in $\mathcal{G}(C)$. Let $\mathcal{C}_{i}=\{q^{i}_{1},\dots,q^{i}_{m_{i}}\}$ be the set of all PSCs of $i$, where $m_{i}=|\mathcal{C}_{i}|$. Define $\eta_{i}$ and $\map{\phi_{i}}{\Delta_{n}}{\mathbb{R}}$ as
  \begin{align*}
    \eta_{i}=\frac{a_{i}(1-x_{i})}{1-a_{i}x_{i}}, \quad \textup{and} \quad
    \phi_{i}(x)=\sum_{j=1}^{m_{i}}C_{q^{i}_{j}}\prod_{l\in q^{i}_{j}\setminus\{i\}}\eta_{l}. 
    \end{align*}  
\begin{enumerate}
  \item For any $x,z\in\interior\Delta_{n}$,  {\small\begin{align*}
  \Phi_{ii}(x)=\frac{1}{1-a_{i}x_{i}-a_{i}(1-x_{i})\phi_{i}(x)};
  \end{align*}} \label{L3-1}
  \item if $x_{j}>z_{j}$ for all $j\neq i$, then \label{L3-2}
  \begin{align*}
  \phi_{i}(x)\geq\phi_{i}(z)\prod_{j\neq i}\frac{1-x_{j}}{1-z_{j}}\frac{1-a_{j}z_{j}}{1-a_{j}x_{j}}.
  \end{align*}
\end{enumerate}
\end{lemma}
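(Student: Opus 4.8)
The plan is to prove the two parts separately: the exact identity~\ref{L3-1} by a first-return (renewal) decomposition of the resolvent $\Phi(x)=(I_n-AW(x))^{-1}$, and the comparison inequality~\ref{L3-2} by a factor-by-factor estimate. For~\ref{L3-1}, I would first note that under Assumption~\ref{A1} the matrix $W(x)$ is row-stochastic and $A=\diag(\mathbf{a})$ with $\max_i a_i<1$, so $\rho(AW(x))\leq\|AW(x)\|_{\infty}=\max_i a_i<1$ and the Neumann series $\Phi(x)=\sum_{k\geq 0}(AW(x))^k$ converges. Consequently $\Phi_{ii}(x)$ is the generating function of all closed walks at $i$, each weighted by the product of its traversed entries of $AW(x)$, where $(AW(x))_{ii}=a_ix_i$ and $(AW(x))_{il}=a_i(1-x_i)C_{il}$ for $l\neq i$ (using $C_{ii}=0$). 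Decomposing each closed walk by its \emph{first} return to $i$ gives the renewal identity $\Phi_{ii}(x)=\sum_{k\geq 0}F_i^{\,k}=1/(1-F_i)$, where $F_i$ collects all walks that start and end at $i$ without visiting $i$ internally. It remains to identify $F_i=a_ix_i+a_i(1-x_i)\phi_i(x)$: the self-loop at $i$ contributes $a_ix_i$, while every other first-return walk departs from $i$, passes only through partially stubborn nodes, and returns, thereby tracing a partially stubborn cycle $q^i=(i,l_1,\dots,l_r,i)$; at each intermediate node $l$ the walk may idle on the self-loop before advancing, and summing $\sum_{m\geq 0}(a_lx_l)^m=1/(1-a_lx_l)$ converts the transition weight $a_l(1-x_l)$ into exactly $\eta_l$. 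Collecting the departure factor $a_i(1-x_i)$, the edge product $C_{q^i}$, and the factors $\eta_l$ along $q^i\setminus\{i\}$ reproduces $a_i(1-x_i)\phi_i(x)$.

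The main obstacle in~\ref{L3-1} is precisely this last bookkeeping step: one must verify that the contribution of every first-return walk factors cleanly through a single partially stubborn cycle decorated with independent self-loop multiplicities, so that the infinite sum over walks collapses to the finite sum over $\mathcal{C}_i$ with the $\eta_l$ corrections. Making the interplay between the self-loops (absorbed into $\eta_l$) and the cycle structure precise is the delicate point, and I would guard against miscounting by treating it algebraically as an alternative: writing $\Phi_{ii}(x)$ as the ratio of the $(i,i)$-cofactor of $I_n-AW(x)$ to $\det(I_n-AW(x))$ and applying Cauchy's determinant identity (Lemma~\ref{L00}) to expand $\det(I_n-AW(x))$ along the $i$-th row and column yields the same factorization purely algebraically.

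For~\ref{L3-2}, I would argue termwise. A direct computation gives the exact ratio $\eta_l(x)/\eta_l(z)=\tfrac{1-x_l}{1-z_l}\cdot\tfrac{1-a_lz_l}{1-a_lx_l}$, and since $a_l<1$ this factor lies in $(0,1)$ whenever $x_l>z_l$ (equivalently, $\eta_l$ is strictly decreasing in its argument, and the factor stays in $(0,1)$ even for fully stubborn $l$, where it reduces to $\tfrac{1-x_l}{1-z_l}$). For each partially stubborn cycle $q^i_j$ we therefore have
\begin{align*}
\prod_{l\in q^i_j\setminus\{i\}}\eta_l(x)
&=\Big(\prod_{l\in q^i_j\setminus\{i\}}\eta_l(z)\Big)\prod_{l\in q^i_j\setminus\{i\}}\frac{1-x_l}{1-z_l}\frac{1-a_lz_l}{1-a_lx_l}\\
&\geq\Big(\prod_{l\in q^i_j\setminus\{i\}}\eta_l(z)\Big)\prod_{l\neq i}\frac{1-x_l}{1-z_l}\frac{1-a_lz_l}{1-a_lx_l},
\end{align*}
where the inequality holds because enlarging the product to all $l\neq i$ only multiplies in further factors lying in $(0,1)$. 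Since $C_{q^i_j}\geq 0$, multiplying by $C_{q^i_j}$ and summing over $j=1,\dots,m_i$ factors out the common product $\prod_{l\neq i}\tfrac{1-x_l}{1-z_l}\tfrac{1-a_lz_l}{1-a_lx_l}$ and leaves $\phi_i(z)$ inside, which is exactly the asserted bound. This part is routine once the monotonicity of $\eta_l$ and the nonnegativity of the cycle weights are in hand; the only point needing attention is that the extra factors one multiplies in are genuinely at most $1$, which is guaranteed by $x_j>z_j$ and $a_j<1$ for all $j\neq i$.
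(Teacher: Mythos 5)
Your part (ii) is correct and coincides with the paper's own argument: each ratio $\frac{1-x_l}{1-z_l}\cdot\frac{1-a_lz_l}{1-a_lx_l}$ lies in $(0,1)$ when $x_l>z_l$, so enlarging the product from the nodes of one cycle to all of $\V\setminus\{i\}$ only decreases it, and nonnegativity of the $C_{q^i_j}$ lets you sum over cycles. The genuine gap is in part (i), at exactly the step you flagged as the main obstacle, and it cannot be closed by more careful bookkeeping: the identification $F_i=a_ix_i+a_i(1-x_i)\phi_i(x)$ is false in general, because a first-return walk need not trace a \emph{simple} cycle decorated with self-loops; it may traverse directed cycles among the intermediate partially stubborn nodes. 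Concretely, take $n=3$, $i=1$, $\mathbf{a}=(\tfrac12,\tfrac12,\tfrac12)^{\top}$, $x=(\tfrac13,\tfrac13,\tfrac13)^{\top}\in\interior\Delta_{3}$, and $C_{jl}=\tfrac12$ for all $j\neq l$. The walk $1\to2\to3\to2\to1$ is a first-return walk of positive weight that corresponds to no PSC. Numerically, $AW(x)$ has all entries equal to $\tfrac16$, so $\Phi(x)=I_{3}+\tfrac13\mathbf{1}_{3}\mathbf{1}_{3}^{\top}$ and $\Phi_{11}(x)=\tfrac43$, whereas $\eta_{2}=\eta_{3}=\tfrac25$, $\phi_{1}(x)=\tfrac{6}{25}$, and the right-hand side of (i) equals $\bigl(1-\tfrac16-\tfrac13\cdot\tfrac{6}{25}\bigr)^{-1}=\tfrac{150}{113}\neq\tfrac43$. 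Your algebraic fallback via Lemma~\ref{L00} does not rescue the step: $\Phi_{11}$ is the ratio of the $(1,1)$ cofactor of $I_{n}-AW(x)$ to $\det(I_{n}-AW(x))$, and that cofactor, $\det\bigl(\begin{smallmatrix}1-a_{2}x_{2} & -a_{2}(1-x_{2})C_{23}\\ -a_{3}(1-x_{3})C_{32} & 1-a_{3}x_{3}\end{smallmatrix}\bigr)$, already carries the cross term generated by the cycle $2\to3\to2$, which is precisely what the finite sum over PSCs omits.

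You should also know that the paper's own proof makes the identical unjustified move, so your proposal is in substance the paper's argument recast in walk-generating-function language, and it inherits the same flaw: in \eqref{equa:PSP}, unrolling $\Phi_{ji}(x)=\eta_{j}\sum_{l}C_{jl}\Phi_{li}(x)$ generates \emph{all} walks from $j$ to $i$ that avoid $i$ internally, including walks that revisit intermediate nodes, yet every resulting sequence is asserted to be a PSP, i.e., a sequence of distinct nodes. The merit of your renewal framing is that it makes this discrepancy visible; its defect is that you left the verification open and the verification is impossible, since the identity as stated fails on the example above. As stated, part (i) holds only under an additional structural hypothesis --- for instance, when the subgraph of $\mathcal{G}(C)$ induced on $\Vp\setminus\{i\}$ contains no directed cycle, so that first-return walks cannot revisit intermediate nodes (this is the situation for the center node of the star topologies in Section~\ref{s5}) --- or after redefining $\phi_{i}(x)$ as the convergent infinite sum over all first-return skeleton walks, a node visited $k$ times contributing the factor $\eta_{l}^{k}$, which is exactly the quantity $F_i$ in your decomposition. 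Note that under such a redefinition, part (ii) would also need to be re-examined, since for a walk with multiplicities $k_l\geq 2$ the termwise bound $\prod_{l}(\cdot)^{k_l}\geq\prod_{j\neq i}(\cdot)$ used by both you and the paper no longer holds.
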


\begin{proof} By $(I_{n}-AW(x))\Phi(x)=I_{n}$, we obtain
  \begin{equation}\label{equa:Phi_ii}
    \begin{cases}
      \Phi_{ii}(x)=\frac{1}{1-a_{i}x_{i}}+\eta_{i}\sum_{l=1}^{n}C_{il}\Phi_{li}(x),\\
      \Phi_{ji}(x)=\eta_{j}\sum_{l=1}^{n}C_{jl}\Phi_{li}(x).
    \end{cases}
    \end{equation}

\emph{Proof of~\ref{L3-1}:} 
If $i\in\Vf$, then $\eta_{i}=a_{i}=0$ and $\Phi_{ii}(x)=1$, which is a trivial case. We now focus on the case that $i\in\Vp$. Note that if $j\in\Vf$, then $\Phi_{ji}(x)=\eta_{j}=0$. For any $j\in\Vp\cap \mathcal{N}_{i}^{-}$ with $\Phi_{ji}(x)>0$, by Lemma \ref{L2}~\ref{L2-2}, assume that $\mathcal{C}_{ji}=\{q_{1}^{ji},\dots, q_{m_{ji}}^{ji}\}$ is the set of all the PSPs from $j$ to $i$ in $\mathcal{G}(C)$. Then, we have 
\begin{align}\label{equa:PSP}
    \Phi_{ji}(x)=&\eta_{j}\sum_{l^{1}\in\Vp\cap\mathcal{N}_{j}^{-}}C_{jl^{1}}\Phi_{l^{1}i}(x) \notag\\
    =&\eta_{j}\sum_{l^{1}\in\Vp\cap\mathcal{N}_{j}^{-}}C_{jl^{1}}\eta_{l^{1}}\sum_{l^{2}\in\Vp\cap\mathcal{N}_{l^{1}}^{-}}C_{l^{1}l^{2}}\Phi_{l^{2}i}(x) \notag\\
    =&\sum_{l^{1}\in\Vp\cap\mathcal{N}_{j}^{-}}\sum_{l^{2}\in\Vp\cap\mathcal{N}_{l^{1}}^{-}}\dots\sum_{l\in\Vp\cap\mathcal{N}_{i}^{+}}\eta_{j}\eta_{l^{1}}\eta_{l^{2}}\dots\eta_{l}\notag\\
      &C_{jl^{1}}C_{l^{1}l^{2}}\dots C_{li}\Phi_{ii}(x)\\
  =&\Phi_{ii}(x)\sum_{h=1}^{m_{ji}}C_{q_{h}^{ji}}\prod_{l\in q_{h}^{ji}\setminus\{i\}}\eta_{l}, \notag
\end{align}
where each sequence $(j,l^1,\dots,l,i)$ is a PSP in $\mathcal{C}_{ji}$. Combining \eqref{equa:Phi_ii} and \eqref{equa:PSP}, we obtain 
\begin{multline*}
  \Phi_{ii}(x)=\\\frac{1}{1-a_{i}x_{i}}+\eta_{i}\Phi_{ii}(x)\sum_{j\in\Vp\cap \mathcal{N}_{i}^{-}}C_{ij}\sum_{h=1}^{m_{ji}}C_{p_{h}^{ji}}\prod_{l\in p_{h}^{ji}\setminus\{i\}}\eta_{l}\\
  =\frac{1}{1-a_{i}x_{i}}+\frac{a_{i}(1-x_{i})}{1-a_{i}x_{i}}\Phi_{ii}(x)\phi_{i}(x),
\end{multline*}
which yields 
\begin{align*}
  \Phi_{ii}(x)=\frac{1}{1-a_{i}x_{i}-a_{i}(1-x_{i})\phi_{i}(x)}.
\end{align*}

\emph{Proof of~\ref{L3-2}:} For all $j\neq i$, since $x_{j}>z_{j}$, we have 
\begin{align*}
  \frac{1-x_{j}}{1-z_{j}}\frac{1-a_{j}z_{j}}{1-a_{j}x_{j}}<1.
\end{align*}
Therefore, by $q^{i}_{j}\subset\V$ for all $j\in\until{m_{i}}$, we obtain 
\begin{align*}
  \phi_{i}(x)&=\sum_{j=1}^{m_{i}}C_{q^{i}_{j}}\prod_{l\in q^{i}_{j}\setminus\{i\}}\frac{a_{l}(1-x_{l})}{1-a_{l}x_{l}}\\
  &=\sum_{j=1}^{m_{i}}C_{q^{i}_{j}}\prod_{l\in q^{i}_{j}\setminus\{i\}}\frac{a_{l}(1-z_{l})}{1-a_{l}z_{l}}\frac{1-x_{l}}{1-z_{l}}\frac{1-a_{l}z_{l}}{1-a_{l}x_{l}}\\
  &\geq\prod_{h\neq i}\frac{1-x_{h}}{1-z_{h}}\frac{1-a_{h}z_{h}}{1-a_{h}x_{h}}\sum_{j=1}^{m_{i}}C_{q^{i}_{j}}\prod_{l\in q^{i}_{j}\setminus\{i\}}\frac{a_{l}(1-z_{l})}{1-a_{l}z_{l}}\\
  &=\phi_{i}(z)\prod_{h\neq i}\frac{1-x_{h}}{1-z_{h}}\frac{1-a_{h}z_{h}}{1-a_{h}x_{h}}.
\end{align*}
\end{proof}

\begin{lemma}\label{L4}
Suppose $0<m<n$ and $a_{i}\in (0,1)$ for all $i\in\until{m}$. If $\sum_{j=2}^{m}\frac{a_{j}}{1-a_{j}}\leq\frac{1}{a_{1}(1-a_{1})}-\frac{4}{n}$, then 
\begin{align*}
  \sum_{j=2}^{m}a_{j}< \frac{n}{4a_{1}(1-a_{1})}-\frac{1-a_{1}}{16a_{1}}n-1.
\end{align*}
\end{lemma}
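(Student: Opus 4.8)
The plan is to bound $S:=\sum_{j=2}^{m}a_{j}$ from above by two elementary inequalities and compare each with the target. First, since each $a_{j}\in(0,1)$ and there are $m-1$ summands with $m\le n-1$, we get the \emph{count bound} $S\le m-1\le n-2$ (strict when $m\ge2$). Second, since $a_{j}<\tfrac{a_{j}}{1-a_{j}}$ for $a_{j}\in(0,1)$, the hypothesis yields the \emph{budget bound} $S<\sum_{j=2}^{m}\tfrac{a_{j}}{1-a_{j}}\le K$, where $K:=\tfrac{1}{a_{1}(1-a_{1})}-\tfrac{4}{n}$. The point of the proof is that neither bound works uniformly: the budget bound is hopeless as $a_{1}\to0$ (then $K\to\infty$), while the count bound is hopeless for large $n$; so one must use them on complementary ranges of $n$ and glue the two arguments at a boundary value.

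It helps to rewrite the right-hand side of the claim. Setting $g(a):=\tfrac{4-(1-a)^{2}}{16a(1-a)}$ and substituting $\tfrac{1}{a_{1}(1-a_{1})}=K+\tfrac{4}{n}$, a short computation shows the target quantity equals $R:=g(a_{1})\,n-1$, since the two constants $\pm1$ and the $4/n$ term recombine exactly. Thus it remains to prove $S<R$.

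For small $n$, say $n\le 6$, I would use the count bound. Because $S\le n-2$, it suffices to show $R>n-2$, i.e. $g(a_{1})>\tfrac{n-1}{n}$; as $\tfrac{n-1}{n}\le \tfrac{5}{6}$ for $n\le 6$, this reduces to the single uniform estimate $g(a_{1})>\tfrac{5}{6}$. Clearing denominators, $g(a)>\tfrac{5}{6}$ is equivalent to $74a^{2}-68a+18>0$, which holds for every real $a$ since its discriminant $68^{2}-4\cdot74\cdot18=-704<0$. For large $n$, say $n\ge 7$, I would use the budget bound: since $S<K$ it suffices to show $K\le R$. Multiplying $R-K$ by $16a_{1}(1-a_{1})>0$ gives
\[
\bigl(4-(1-a_{1})^{2}\bigr)n-16-16a_{1}(1-a_{1})+\tfrac{64a_{1}(1-a_{1})}{n}.
\]
Dropping the last (positive) term and using $4-(1-a_{1})^{2}>3$ together with $16a_{1}(1-a_{1})\le 4$ bounds this below by $3n-20\ge 1>0$ for $n\ge 7$. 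The ranges $n\le 6$ and $n\ge 7$ cover all admissible $n\ge 2$, and both inequalities are strict, which finishes the argument.

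The main obstacle is conceptual rather than computational: recognizing that the two natural upper bounds on $S$ degrade in opposite limits, so a single crude estimate cannot succeed and the proof must be split by the size of $n$. The only estimate requiring any work is the uniform lower bound $g(a_{1})>\tfrac{5}{6}$ used in the small-$n$ range; once its negative discriminant is observed, both regimes collapse to routine sign checks. (For context, the sharp infimum of $g$ is $\tfrac{2\sqrt{3}-3}{4(7\sqrt{3}-12)}\approx 0.933$, comfortably above the threshold $5/6$ that the proof needs, so the crude bound is enough.)
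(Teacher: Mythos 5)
Your proof is correct, and it takes a genuinely different route from the paper's. Writing $S=\sum_{j=2}^{m}a_{j}$ and $K=\frac{1}{a_{1}(1-a_{1})}-\frac{4}{n}$, the paper gives a single argument uniform in $n$: it applies Cauchy's inequality (in AM--HM form, to the numbers $1-a_{j}$) to turn the hypothesis into $S\le(m-1)-\frac{(m-1)^{2}}{K+m-1}=\frac{(m-1)K}{(m-1)+K}$ --- a harmonic-type combination that is never larger than either of your two bounds $m-1$ and $K$ --- and then closes with one algebraic verification valid for all $n$, organized around the auxiliary quantities $h=1-4a_{1}(1-a_{1})/n$, $g_{1}=a_{1}(1-a_{1})(n(1-a_{1})^{2}-4h(n-4))$ and $g_{2}=4h^{2}-(1-a_{1})^{2}h$. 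You instead keep the two crude bounds separate ($S\le m-1\le n-2$ and $S\le K$) and exploit that each suffices on a complementary range of $n$: for $n\le 6$ the count bound closes because $g(a_{1})>\frac{5}{6}$ uniformly (negative discriminant of $74a^{2}-68a+18$), and for $n\ge 7$ the budget bound closes because $16a_{1}(1-a_{1})(R-K)>3n-20\ge 1$, where $R=g(a_{1})n-1$ is the right-hand side of the claim. The paper's approach buys uniformity (no case split) and a strictly sharper intermediate estimate; yours buys elementarity --- no appeal to Cauchy's inequality, and each regime reduces to a one-line sign check that is easier to audit than the paper's chain $g_{1}<a_{1}(1-a_{1})(20-3n-16/n)\le 6a_{1}(1-a_{1})$ and $\frac{m-1}{n}g_{1}<g_{2}$. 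One point of hygiene: when $m=1$ your two bounds hold only with equality ($S=0$), but since your comparisons $n-2<R$ and $K<R$ are strict, the conclusion $S<R$ still follows, so the argument is complete as written.
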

\begin{proof}
By $\sum_{j=2}^{m}\frac{a_{j}}{1-a_{j}}\leq\frac{1}{a_{1}(1-a_{1})}-\frac{4}{n}$ we have
\begin{align*}
  \sum_{j=2}^{m}\frac{1}{1-a_{j}}=\sum_{j=2}^{m}(\frac{a_{j}}{1-a_{j}}+1)\leq \frac{1}{a_{1}(1-a_{1})}-\frac{4}{n}+m-1.
\end{align*}
Then, Cauchy's inequality~\cite[Theorem 7]{GHH-JEL-GP:34} yields 
\begin{align*}
  \sum_{j=2}^{m}(1-a_{j})\geq\frac{(m-1)^{2}}{\sum_{j=2}^{m}\frac{1}{1-a_{j}}}\geq \frac{(m-1)^{2}}{\frac{1}{a_{1}(1-a_{1})}-\frac{4}{n}+m-1},
\end{align*}
which implies 
\begin{align*}
  \sum_{j=2}^{m}a_{j}\leq m-1-\frac{(m-1)^{2}}{\frac{1}{a_{1}(1-a_{1})}-\frac{4}{n}+m-1}.
\end{align*}
Hence, it suffices to prove 
\begin{align*}
m-\frac{(m-1)^{2}}{\frac{1}{a_{1}(1-a_{1})}-\frac{4}{n}+m-1}<\frac{n}{4a_{1}(1-a_{1})}-\frac{1-a_{1}}{16a_{1}}n,
\end{align*}
which is equivalent to $(m-1)g_{1}/n< g_{2}$
with $g_{1}=a_{1}(1-a_{1})(n(1-a_{1})^{2}-4h(n-4))$, $g_{2}=4h^{2}-(1-a_{1})^{2}h$ and $h=1-4a_{1}(1-a_{1})/n$. By $0<a_{1}(1-a_{1})\leq 1/4$, we have $1/2\leq 1-1/n\leq h<1$, which means 
\begin{align*}
  g_{1}<a_{1}(1-a_{1})(20-3n-\frac{16}{n})\leq 6a_{1}(1-a_{1}),
\end{align*}
as $20-3n-16/n$ strictly decreases with respect to $n$. Thus, 
\begin{align*}
\frac{m-1}{n}g_{1}<6a_{1}(1-a_{1})h\leq (2-(1-a_{1})^{2})h<g_{2},
\end{align*}
where the last inequality is implied by $2\leq 4h$.
\end{proof}
\section{Proofs of Theorem~\ref{T2} and Proposition~\ref{P3}}\label{app:T2}
\subsection{Proof of Theorem~\ref{T2}}
Suppose that $p^{*},\hat{p}^{*}\in\interior\Delta_{n}$ are both equilibria of system \eqref{e11}. By \eqref{e3.1}, we obtain 
\begin{multline}\label{equ:18}
(I_{n}-W^{\top}(p^{*})A)(I_{n}-A)^{-1}p^{*}\\-(I_{n}-W^{\top}(\hat{p}^{*})A)(I_{n}-A)^{-1}\hat{p}^{*}=0.
\end{multline}
Since $\db{p^{*}}p^{*}-\db{\hat{p}^{*}}\hat{p}^{*}=\db{p^{*}+\hat{p}^{*}}(p^{*}-\hat{p}^{*})$, we have 
\begin{align*}
&W^{\top}(p^{*})A(I_{n}-A)^{-1}p^{*}-W^{\top}(\hat{p}^{*})A(I_{n}-A)^{-1}\hat{p}^{*}\\=& \db{p^{*}+\hat{p}^{*}}A(I_{n}-A)^{-1}(p^{*}-\hat{p}^{*})\\&+C^{\top}(I_{n}-\db{p^{*}+\hat{p}^{*}})A(I_{n}-A)^{-1}(p^{*}-\hat{p}^{*})\\=&W^{\top}(p^{*}+\hat{p}^{*})A(I_{n}-A)^{-1}(p^{*}-\hat{p}^{*}).
\end{align*}
Hence, \eqref{equ:18} can be rearranged as:  
\begin{equation*}
(I_{n}-W^{\top}(p^{*}+\hat{p}^{*})A)(I_{n}-A)^{-1}(p^{*}-\hat{p}^{*})=0,
\end{equation*}
which implies $p^{*}=\hat{p}^{*}$ if and only if $I_{n}-AW(p^{*}+\hat{p}^{*})$ is non-singular. If $p^{*}+\hat{p}^{*}\leq\mathbf{1}_{n}$, then $W(p^{*}+\hat{p}^{*})$ is row-stochastic and $\rho(AW(p^{*}+\hat{p}^{*}))<1$ under Assumption \ref{A1}, which means $I_{n}-AW(p^{*}+\hat{p}^{*})$ is non-singular.

We now turn to the case that $p^{*}+\hat{p}^{*}\nleq\mathbf{1}_{n}$. Without loss of generality, let $p_{1}^{*}+\hat{p}_{1}^{*}>1$ with $p_{1}^{*}\geq\hat{p}_{1}^{*}$. Then, $p^{*},\hat{p}^{*}\in\interior\Delta_{n}$ necessitates $p_{1}^{*}>1/2>\sum_{i\neq 1}p_{i}^{*}$ and $0<p_{i}^{*}+\hat{p}_{i}^{*}<1$ for all $i\neq 1$.
Let $z=p^{*}+\hat{p}^{*}-\mathbf{e}_{1}$, where $\mathbf{e}_{1}$ is the $1$-th standard basis. Then, $z\in\interior\Delta_{n}$ and
\begin{align*}
I_{n}-AW(p^{*}+\hat{p}^{*})
%=&I_{n}-AW(z+\mathbf{e}_{1})\\
=&I_{n}-AW(z)-A\db{\mathbf{e}_{1}}(I_{n}-C)\\
=&I_{n}-AW(z)-a_{1}\mathbf{e}_{1}(\mathbf{e}_{1}-C^{\top}\mathbf{e}_{1})^{\top},
\end{align*}
By Cauchy's formula for rank-one permutation of determinant~\cite[equation (0.8.5.11)]{RAH-CRJ:12},
 \begin{align*}
&\det(I_{n}-AW(p^{*}+\hat{p}^{*}))\\=&\det(I_{n}-AW(z))-a_{1}(\mathbf{e}_{1}-C^{\top}\mathbf{e}_{1})^{\top}\adj(I_{n}-AW(z))\mathbf{e}_{1}\\
=&\det(I_{n}-AW(z))(1-a_{1}\mathbf{e}^{\top}_{1}(I_{n}-C)\Phi(z)\mathbf{e}_{1}),
\end{align*}
where $\det(I_{n}-AW(z))\neq 0$ is implied by $z\in\interior\Delta_{n}$ and Assumption \ref{A1}. Thus, $I_{n}-AW(p^{*}+\hat{p}^{*})$ is non-singular if and only if $1-a_{1}\mathbf{e}^{\top}_{1}(I_{n}-C)\Phi(z)\mathbf{e}_{1}\neq 0$. Since this is trivial for $a_{1}=0$, we assume $a_{1}>0$, and we shall prove $\mathbf{e}^{\top}_{1}(I_{n}-C)\Phi(z)\mathbf{e}_{1}< 1/a_{1}$. Specifically, we consider two cases: $a_{1}z_{1}\leq 1-a_{1}$ and $a_{1}z_{1}> 1-a_{1}$. 

\emph{Case 1. $a_{1}z_{1}\leq 1-a_{1}$:}  
By \eqref{equa:Phi_ii}, we obtain
\begin{multline}\label{e14}
  \mathbf{e}^{\top}_{1}(I_{n}-C)\Phi(z)\mathbf{e}_{1}=\Phi_{11}(z)-\sum_{j=1}^{n}C_{1j}\Phi_{j1}(z)\\=\frac{1}{1-a_{1}z_{1}}-\frac{1-a_{1}}{1-a_{1}z_{1}}\sum_{j=1}^{n}C_{1j}\Phi_{j1}(z).
\end{multline}
Hence, $\mathbf{e}^{\top}_{1}(I_{n}-C)\Phi(z)\mathbf{e}_{1}< 1/a_{1}$ if and only if  
\begin{equation}\label{e14.1}
  a_{1}z_{1}<1-a_{1}+a_{1}(1-a_{1})\sum_{j=1}^{n}C_{1j}\Phi_{j1}(z). 
\end{equation}
If $\sum_{j=1}^{n}C_{1j}\Phi_{j1}(z)>0$, then \eqref{e14.1} holds. Otherwise, by Lemma~\ref{L2}~\ref{L2-2}, $\sum_{j=1}^{n}C_{1j}\Phi_{j1}(z)=0$ is equivalent to $\sum_{j=1}^{n}C_{1j}\Phi_{j1}(p^{*})=0$ and $\sum_{j=1}^{n}C_{1j}\Phi_{j1}(\hat{p}^{*})=0$. Hence, by Lemma~\ref{L2}~\ref{L2-5}, we have $\hat{p}_{1}^{*}\leq p_{1}^{*}<1/(2a_{1})$. As a result, $a_{1}z_{1}=a_{1}(p_{1}^{*}+\hat{p}_{1}^{*})-a_{1}<1-a_{1}$ and \eqref{e14.1} holds.

\emph{Case 2. $a_{1}z_{1}>1-a_{1}$:}
Combining \eqref{equa:Phi_ii} and \eqref{e14.1}, $\mathbf{e}^{\top}_{1}(I_{n}-C)\Phi(z)\mathbf{e}_{1}< 1/a_{1}$ is equivalent to $\Phi_{11}(z)>z_{1}/(1-a_{1})$. Since $a_{1}z_{1}>1-a_{1}$, we obtain $(1-a_{1})/a_{1}<z_{1}<\hat{p}^{*}_{1}\leq p^{*}_{1}$, $p_{1}^{*}>1/(2a_{1})$ and $a_{1}>1/2$. Hence, Lemma~\ref{L2}~\ref{L2-5} implies that node $1$ has PSCs in $\mathcal{G}(C)$. By Lemma~\ref{L3}~\ref{L3-1}, 
$\Phi_{11}(z)>z_{1}/(1-a_{1})$ if and only if 
\begin{align}\label{e15.1}
  \phi_{1}(z)>\frac{a_{1}z_{1}-(1-a_{1})}{a_{1}z_{1}}.
\end{align}
By~\eqref{e3.1}, Lemma~\ref{L2}~\ref{L2-4} and Lemma~\ref{L3}~\ref{L3-1}, we obtain 
\begin{align*}
\hat{p}^{*}_{1}&=\frac{1-a_{1}}{n}\sum_{j=1}^{n}\Phi_{j1}(\hat{p}^{*})<(1-a_{1})\Phi_{11}(\hat{p}^{*})\\ &=\frac{1-a_{1}}{1-a_{1}\hat{p}^{*}_{1}-a_{1}(1-\hat{p}^{*}_{1})\phi_{1}(\hat{p}^{*})},
\end{align*}
which further yields
\begin{align}\label{e15.2}
  \phi_{1}(\hat{p}^{*})>\frac{a_{1}\hat{p}^{*}_{1}-(1-a_{1})}{a_{1}\hat{p}^{*}_{1}}.
\end{align}
Moreover, since $z_{j}>\hat{p}^{*}_{j}$ for all $j\neq 1$, Lemma~\ref{L3}~\ref{L3-2} implies 
\begin{align*}
  \phi_{1}(z)&\geq\phi_{1}(\hat{p}^{*})\prod_{l\neq 1}\frac{1-z_{l}}{1-\hat{p}^{*}_{l}}\frac{1-a_{l}\hat{p}^{*}_{l}}{1-a_{l}z_{l}}\\
  &>\frac{a_{1}\hat{p}^{*}_{1}-(1-a_{1})}{a_{1}\hat{p}^{*}_{1}}\prod_{l\neq 1}\frac{1-z_{l}}{1-\hat{p}^{*}_{l}}\frac{1-a_{l}\hat{p}^{*}_{l}}{1-a_{l}z_{l}}.
  \end{align*}
To ensure \eqref{e15.1}, we only need to prove 
\begin{align}\label{e15}
  \prod_{l\neq 1}\frac{1-z_{l}}{1-\hat{p}^{*}_{l}}\frac{1-a_{l}\hat{p}^{*}_{l}}{1-a_{l}z_{l}}\geq\frac{a_{1}z_{1}-(1-a_{1})}{a_{1}\hat{p}^{*}_{1}-(1-a_{1})}\frac{\hat{p}^{*}_{1}}{z_{1}}.
\end{align}
Since $z_{l}=p_{l}^{*}+\hat{p}_{l}^{*}<1$ for all $l\neq 1$, we have 
\begin{align*}
  \prod_{l\neq 1}\frac{1-z_{l}}{1-\hat{p}^{*}_{l}}\frac{1-a_{l}\hat{p}^{*}_{l}}{1-a_{l}z_{l}}&=\prod_{l\neq 1}(1-\frac{(1-a_{l})p^{*}_{l}}{(1-\hat{p}^{*}_{l})(1-a_{l}z_{l})}).
\end{align*}
By $1-\hat{p}^{*}_{l}>p^{*}_{l}$, we have $(1-a_{l})p^{*}_{l}<(1-a_{l})(1-\hat{p}^{*}_{l})+a_{l}(1-z_{l})(1-\hat{p}^{*}_{l})=(1-\hat{p}^{*}_{l})(1-a_{l}z_{l})$, that is, $0<\frac{(1-a_{l})p^{*}_{l}}{(1-\hat{p}^{*}_{l})(1-a_{l}z_{l})}<1$. By Bernoulli's inequality~\cite[Theorem 58]{GHH-JEL-GP:34}, we obtain 
\begin{align*}
\prod_{l\neq 1}(1-\frac{(1-a_{l})p^{*}_{l}}{(1-\hat{p}^{*}_{l})(1-a_{l}z_{l})})&\geq 1-\sum_{l\neq 1}\frac{(1-a_{l})p^{*}_{l}}{(1-\hat{p}^{*}_{l})(1-a_{l}z_{l})}.
\end{align*}
% \begin{align*}
% \prod_{l\neq 1}(1-\frac{(1-a_{l})p^{*}_{l}}{(1-\hat{p}^{*}_{l})(1-a_{l}z_{l})})&\geq 1-\sum_{l\neq 1}\frac{(1-a_{l})p^{*}_{l}}{(1-\hat{p}^{*}_{l})(1-a_{l}z_{l})}\\
% &\geq 1-\frac{1}{\hat{p}^{*}_{1}} \sum_{l\neq 1}p^{*}_{l}=\frac{z_{1}}{\hat{p}^{*}_{1}},
% \end{align*}
% where the second inequality follows from $1-a_{l}\leq 1-a_{l}z_{l}$ and $1-\hat{p}^{*}_{l}\geq\hat{p}^{*}_{1}$ for all $l\neq 1$. Therefore, \eqref{e15} holds if
% \begin{align}\label{e15.3}
%   \frac{z_{1}}{\hat{p}^{*}_{1}}\geq\frac{a_{1}z_{1}-(1-a_{1})}{a_{1}\hat{p}^{*}_{1}-(1-a_{1})}\frac{\hat{p}^{*}_{1}}{z_{1}}
% \end{align}
On the other hand, by $z_{1}=p_{1}^{*}+\hat{p}_{1}^{*}-1$ we obtain 
\begin{align*}
  \frac{a_{1}z_{1}-(1-a_{1})}{a_{1}\hat{p}^{*}_{1}-(1-a_{1})}\frac{\hat{p}^{*}_{1}}{z_{1}}=1-\sum_{l\neq 1}\frac{(1-a_{1})p^{*}_{l}}{(a_{1}\hat{p}^{*}_{1}-(1-a_{1}))z_{1}}.
\end{align*}
Therefore, \eqref{e15} holds if for all $l\neq 1$, 
\begin{align*}
  \frac{1-a_{1}}{(a_{1}\hat{p}^{*}_{1}-(1-a_{1}))z_{1}}\geq\frac{1-a_{l}}{(1-\hat{p}^{*}_{l})(1-a_{l}z_{l})}.
\end{align*}
Since $1-a_{l}\leq 1-a_{l}z_{l}$ and $1-\hat{p}^{*}_{l}\geq\hat{p}^{*}_{1}$, we have 
\begin{align*}
  \frac{(1-a_{l})}{(1-\hat{p}^{*}_{l})(1-a_{l}z_{l})}\leq\frac{1}{1-\hat{p}^{*}_{l}}\leq\frac{1}{\hat{p}^{*}_{1}}.
\end{align*}
It remains to show
\begin{align*}
  \frac{1}{\hat{p}^{*}_{1}}\leq\frac{1-a_{1}}{(a_{1}\hat{p}^{*}_{1}-(1-a_{1}))z_{1}}.
\end{align*}
By $z_{1}=\hat{p}^{*}_{1}+p^{*}_{1}-1$, this is equivalent to 
\begin{align*}
  a_{1}(\hat{p}^{*}_{1})^{2}-(a_{1}p^{*}_{1}+2(1-a_{1}))\hat{p}^{*}_{1}+(1-a_{1})(1-p^{*}_{1})\leq 0,
\end{align*}
which directly follows from  
$(1-a_{1})/a_{1}<\hat{p}^{*}_{1}\leq p^{*}_{1}$ and $p^{*}_{1}>1/(2a_{1})$.

% Recall that $z_{1}=\hat{p}^{*}_{1}+p^{*}_{1}-1$, 
% $(1-a_{1})/a_{1}<z_{1}<\hat{p}^{*}_{1}\leq p^{*}_{1}<1$ and $p^{*}_{1}>1/(2a_{1})$, by which we obtain
% \begin{align*}
%   \frac{1}{\hat{p}^{*}_{1}}\!\leq\!\frac{1-a_{1}}{(a_{1}\hat{p}^{*}_{1}\!-\!(1-a_{1}))(\hat{p}^{*}_{1}+p^{*}_{1}-1)}\!=\!\frac{1-a_{1}}{(a_{1}\hat{p}^{*}_{1}-(1-a_{1}))z_{1}}.
% \end{align*}
% Hence, \eqref{e15.3} holds.

\subsection{Proof of Proposition~\ref{P3}}

By~\eqref{e11}, we have 
\begin{align*}
p_{i}^{*}=a_{i}(p_{i}^{*})^{2}+(1-a_{i})\sum_{j=1}^{n}C_{ji}\frac{a_{j}}{1-a_{j}}p_{j}^{*}(1-p_{j}^{*})+\frac{1-a_{i}}{n},
\end{align*}
which is equivalent to 
\begin{align*}
p_{i}^{*}=\sum_{j=1}^{n}C_{ji}\frac{a_{j}}{1-a_{j}}p_{j}^{*}(1-p_{j}^{*})-\frac{a_{i}}{1-a_{i}}p_{i}^{*}(1-p_{i}^{*})+\frac{1}{n}.
\end{align*}
Since $p_{i}^{*}>1/2$, we obtain $p_{i}^{*}(1-p_{i}^{*})\geq p_{j}^{*}(1-p_{j}^{*})$ for all $j\neq i$. Thus, 
\begin{align}\label{e14.5}
p_{i}^{*}\leq (\sum_{j=1}^{n}C_{ji}\frac{a_{j}}{1-a_{j}}-\frac{a_{i}}{1-a_{i}})p_{i}^{*}(1-p_{i}^{*})+\frac{1}{n}.
\end{align}
Let $\theta=\sum_{j=1}^{n}C_{ji}\frac{a_{j}}{1-a_{j}}-\frac{a_{i}}{1-a_{i}}$, then $\theta>0$ follows from 
\begin{align*}
\theta p_{i}^{*}(1-p_{i}^{*})\geq p_{i}^{*}-\frac{1}{n}>\frac{1}{2}-\frac{1}{n}\geq 0.
\end{align*}
Moreover,~\eqref{e14.5} implies 
\begin{align*}
p_{i}^{*}\leq\frac{\theta-1+\sqrt{(\theta-1)^{2}+4\theta/n}}{2\theta},
\end{align*}
which, combined with $p_{i}^{*}>\sigma$, yields $\theta>\frac{n\sigma-1}{n\sigma(1-\sigma)}$. 

\section{Proof of Theorem \ref{T6}}\label{app:T6}
We first prove~\ref{T6-2} in detail.  
The proof of~\ref{T6-1} follows the same steps with simplified bounds and is provided below.

\emph{Invariance:} Let $p(s)\in\mathcal{M}$. By \eqref{e11}, we have
\begin{align*}
p_{i}(s+1)\leq a_{i}\frac{(1+a_{i})^{2}}{16a_{i}^{2}}+\frac{1-a_{i}}{4}b_{i}+\frac{1-a_{i}}{n}<\frac{1+a_{i}}{4a_{i}}
\end{align*}
for any $i\in\Vp$, where the first inequality is implied by $p_{j}(s)\leq\frac{1+a_{j}}{4a_{j}}$ and $p_{j}(s)(1-p_{j}(s))\leq 1/4$ for all $j\in\Vp$, the second inequality follows from \eqref{equ:15}. On the other hand, we obtain 
\begin{align*}
  &p_{i}(s+1)\\\geq&-(1-a_{i})\sum_{j\in\Vp}C_{ji}\frac{a_{j}}{1-a_{j}}\frac{(1-a_{j})(1+3a_{j})}{16a_{j}^{2}}+\frac{1-a_{i}}{n}\\
  = & -\frac{1-a_{i}}{4}d_{i}+\frac{1-a_{i}}{n}> -\frac{1-a_{i}}{4a_{i}},
  \end{align*}
where the first inequality is implied by $p_{j}(s)(1-p_{j}(s))\geq-\frac{(1-a_{j})(1+3a_{j})}{16a_{j}^{2}}$ for $p_{j}(s)\in [-\frac{1-a_{j}}{4a_{j}},\frac{1+a_{j}}{4a_{j}}]$, and the last inequality follows from \eqref{equ:16}.
Hence, $p_{i}(s+1)\in (-\frac{1-a_{i}}{4a_{i}},\frac{1+a_{i}}{4a_{i}})$ for all $i\in\Vp$. Similarly, for any $i\in\Vf$, we have 
\begin{align*}
  \mu_{i}\leq\frac{1}{n}-\frac{1}{4}d_{i}\leq p_{i}(s+1)\leq\frac{1}{n}+\frac{1}{4}b_{i} \leq \nu_{i}
  \end{align*}
Therefore, $p(s+1)\in\mathcal{M}$, and $\mathcal{M}$ is positively invariant. 

\emph{Convergence:} Define $\map{F}{\mathcal{M}}{\mathcal{M}}$ by $F(x)=(I_{n}-A)W^{\top}(x)A(I_{n}-A)^{-1}x+(I_{n}-A)\mathbf{1}_{n}/n$.
Then, $F$ is differentiable on $\interior\mathcal{M}$ and continuous on $\mathcal{M}$ with $p(s+1)=F(p(s))$. For an infinitesimal displacement $\delta p(s)$ of $p(s)$, we have $\delta p(s+1)=\frac{\partial F}{\partial x}(p(s))\delta p(s)$, where 
{\small
\begin{align*}
  \frac{\partial F}{\partial x}(p(s))\!=\!(I_n-A)\left(\db{2p(s)}\!+\!C^{\top}(I_{n}\!-\!\db{2p(s)})\right)A(I_n-A)^{-1}
\end{align*}
}
is the Jacobian of $F$ at $p(s)$. Define the transformed system $\delta \tilde{p}(s)=(I_{n}-A)^{-1}\delta p(s)$. Then, $\delta \tilde{p}(s+1)=J(p(s))\delta \tilde{p}(s)$ with 
\begin{equation*}
\resizebox{0.48\textwidth}{!}{$\begin{aligned}
   &J(p(s))=(I_{n}-A)^{-1}\frac{\partial F}{\partial x}(p(s))(I_{n}-A)=\\&\begin{bmatrix}
    2a_{1}p_{1}(s) & C_{21}a_{2}(1-2p_{2}(s)) & \dots & C_{n1}a_{n}(1-2p_{n}(s))\\
    C_{12}a_{1}(1-2p_{1}(s)) & 2a_{2}p_{2}(s) & \dots & C_{n2}a_{n}(1-2p_{n}(s))\\
    \vdots & \vdots & \vdots & \vdots\\
    C_{1n}a_{1}(1-2p_{1}(s)) & C_{2n}a_{2}(1-2p_{2}(s)) & \dots & 2a_{n}p_{n}(s)
 \end{bmatrix}.
 \end{aligned}$} 
\end{equation*}
Recall in the proof invariance, we prove that $p_{i}(s)\in (-\frac{1-a_{i}}{4a_{i}},\frac{1+a_{i}}{4a_{i}})$ for all $i\in\Vp$ and $s>0$. Therefore, 
\begin{align*}
  \norm{J(p(s))}_{1}=\max_{i\in\Vp}a_{i}(2| p_{i}(s)|+| 1-2p_{i}(s) |)<1.
\end{align*}
By \cite[Definition 2 and Theorem 2]{MY-JL-BDOA-CY-TB:17b}, 
$\mathcal{M}$ is a generalized contraction region and all trajectories starting from $p(0)\in\mathcal{M}$ exponentially converge to a unique equilibrium in $\mathcal{M}$. In addition, by Proposition \ref{P3} and \eqref{equ:15}, $p_{i}^{*}\leq 1/2<\frac{1+a_{i}}{4a_{i}}$ for all $i\in\Vp$, which further implies 
\begin{align*}
  p_{i}^{*}=\frac{1}{n}+\sum_{j\in\Vp}\frac{a_{j}}{1-a_{j}}C_{ji}p_{j}^{*}(1-p_{j}^{*})\leq \frac{1}{n}+\frac{1}{4}b_{i}\leq \nu_i
\end{align*}
for all $i\in\Vf$. That is, $p^{*}\in\mathcal{M}\cap \Delta_{n}$. As a result, all trajectories starting from $p(0)\in\mathcal{M}$ exponentially converge to $p^{*}$.
\paragraph*{Proof of \ref{T6-1}} Let $p(s)\in\mathcal{H}$. Then for any $i\in\Vp$, by \eqref{e11} we have 
\begin{align*}
0<p_{i}(s+1)\leq \frac{a_{i}}{4}+\frac{1-a_{i}}{4}b_{i}+\frac{1-a_{i}}{n}\leq\frac{1}{2}, 
\end{align*}
where the last inequality is equivalent to \eqref{equ:15}, and the second inequality directly implies $0<p_{i}(s+1)\leq b_{i}/4+1/n$ for $i\in\Vf$. Hence, $\mathcal{H}$ is positively invariant under \eqref{equ:15}. Following the same contraction arguments as in \ref{T6-2}, we obtain  
\begin{align*}
  \norm{J(p(s))}_{1}=\max_{i\in\Vp}a_{i}(2 p_{i}(s)+ 1-2p_{i}(s) )=\max_{i\in\Vp}a_{i}<1,
\end{align*}
which completes the proof. 

\section{Proof of Theorems \ref{T3} and \ref{T4}}\label{app:T3}
\subsection{Proof of Theorem \ref{T3}}
Since $\mathcal{G}(C)$ is a star topology, we have $C_{j1}=1$ and $C_{ji}=0$ for all $i,j\neq 1$. By~\eqref{e11}, we obtain 
\begin{equation}\label{e17}
\begin{cases}
  p_{1}(s+1)=\sum_{j\in\Vp}\frac{a_{j}}{1-a_{j}}(1-p_{j}(s))p_{j}(s)+\frac{1}{n},\\
p_{i}(s+1)=a_{i}(p_{i}(s))^{2}+\frac{1-a_{i}}{n}, \quad \quad \forall i\in\Vp,
\end{cases}
\end{equation}
and $p_{i}(s+1)=1/n$ for all $i\in\Vf\setminus \{1\}$ and $s\geq 0$. 
% Since the perception of the center node only depends on that of partially stubborn individuals, and each non-center node is an isolated system. Since $x_{i}(s)=\frac{1}{n}$ for all $ i\in\Vf\setminus \{1\}$, $s>0$ and $x_{i}(0)\in [0,1]$, we only need to focus on partially stubborn individuals.

\emph{Proof of \ref{T3-1}:} We first prove sufficiency. For $ i\in\Vp$ and any $|p_{i}(s)|<\xi_i$, we have 
\begin{align*}
  \xi_i>p_{i}(s+1)=a_{i}(p_{i}(s))^{2}+\frac{1-a_{i}}{n}>0.
\end{align*}
Hence, it suffices to focus on $p_{i}(0)\in (0,\xi_i)$. Define $\map{f_{i}}{(0,\xi_i)}{(0,\xi_i)}$ by 
\begin{align}\label{equ:31}
f_{i}(p_{i})=a_{i}p_{i}^{2}+\frac{1-a_{i}}{n},
\end{align}
then $p_{i}(s+1)=f_{i}(p_{i}(s))$ and $p_{i}=f_{i}(p_{i})$ in $(0,\xi_i)$ if and only if $p_{i}=\frac{1}{a_{i}}-\xi_i$.
Thus, $f_{i}$ has a unique fixed point 
\begin{equation}\label{equ:32}
p_{i}^{*}=\frac{1}{a_{i}}-\xi_i\in (\frac{1-a_{j}}{n},\frac{1}{n}).
\end{equation}
Moreover, by \eqref{equ:31} we have    
\begin{align*}
|f_{i}(p_{i})-p_{i}^{*}|=a_{i}(p_{i}+p_{i}^{*})|p_{i}-p_{i}^{*}|< |p_{i}-p_{i}^{*}|,
\end{align*}
where $a_{i}(p_{i}+p_{i}^{*})<1$ follows from \eqref{equ:32}. Consequently, for any $p_{i}(0)\in(-\infty,\xi_i)$, $p_{i}(s)$ converges to $p_{i}^{*}$. As a result, for any $p_{1}(0)\in\mathbb{R}$, $p_{1}(s)$ converges to 
\begin{align*}
p_{1}^{*}=\sum_{j\in\Vp}\frac{a_{j}}{1-a_{j}}(1-p_{j}^{*})p_{j}^{*}+\frac{1}{n}=\frac{1}{n}\sum_{j\in\Vp}^{n}\frac{a_{j}(1-p^{*}_{j})}{1-a_{j}p^{*}_{j}}+\frac{1}{n},
\end{align*}
where the last equality follows from 
{\small\begin{align}\label{equ:34}
p_{j}^{*}=\frac{1-a_{j}}{n(1-a_{j}p_{j}^{*})}, \quad \forall j\in\Vp.
\end{align}}
Finally, the necessity follows from $f_{i}(\xi_i)=\xi_i>1$ and $f_{i}(p_{i})> p_{i}$ for all $|p_{i}|> \xi_i$.

\emph{Proof of \ref{T3-2}:} Since $p_{1}(s+1)$ depends only on $p_{j}(s)$ with $j\in\Vp$, we assume, without loss of generality, that $\Vf=\{1\}$. 
% For the constant $\beta\in(0,1/2)$, let  
% \begin{align*}
%  \mathcal{Z}_{\beta}=\setdef{z\in\mathbb{R}^{n}}{|z_{i}|<\frac{1}{2a_i}-\beta, \; \forall i\in\Vp}. 
% \end{align*}
% One can verify that $\mathcal{Z}_{\beta}$ is positively invariant for system~\eqref{e11}. 
Define $\map{F}{\mathbb{R}^{n}}{\mathbb{R}^{n}}$ by 
\begin{equation*}
  \begin{cases}
    F_{1}(p)=\sum_{j\in\Vp}\frac{a_{j}}{1-a_{j}}(1-p_{j})p_{j}+\frac{1}{n},\\
    F_{i}(p)=f_{i}(p), \quad \quad \forall i\in\Vp.
  \end{cases}
  \end{equation*}
Then, $F$ is differentiable on $\mathcal{Z}_{\beta}$, and its Jacobian is given by
\begin{equation*}
\begin{aligned}
  &\frac{\partial F}{\partial p}(p)=&\begin{bmatrix}
    0 & \frac{a_{2}}{1-a_{2}}(1-2p_{2}) & \dots & \frac{a_{n}}{1-a_{n}}(1-2p_{n})\\
    0 & 2a_{2}p_{2} & \dots & 0\\
    \vdots & \vdots & \vdots & \vdots\\
    0 & 0 & \dots & 2a_{n}p_{n}
 \end{bmatrix}.  
 \end{aligned}
\end{equation*}
% Consider an infinitesimal displacement $\delta p(s)$ at $p(s)$. We have $\delta p(s+1)=\frac{\partial F}{\partial p}(p(s))\delta p(s)$ and 
% \begin{align*}
%   ||\delta p(s+1)||_{2}^{2}=\delta p^{\top}(s)\frac{\partial F^{\top}}{\partial p}(p(s))\frac{\partial F}{\partial p}(p(s))\delta p(s)
% \end{align*}
Therefore, $\rho(\partial F/\partial p(p))=\max_{j}2a_{j}p_{j}$. Since 
\begin{equation*}
2a_{j}p^{*}_{j}=1-\sqrt{1-4a_{j}(1-a_{j})/n}<1, \; \forall j\neq 1,
\end{equation*}
we have $\rho(\partial F/\partial(p^{*}))<1$. Hence, $p^{*}$ is exponentially stable.

\subsection{Proof of Theorem~\ref{T4}}\label{app:T4}
Since the center node is partially stubborn, by~\eqref{e11} we have 
{\small\begin{equation}\label{e20}
 \begin{cases}
  \begin{aligned}
    p_{1}(s+1)=&(1-a_{1})\sum_{j\in\Vp\setminus\{1\}}\frac{a_{j}}{1-a_{j}}(1-p_{j}(s))p_{j}(s)\\
    &+a_{1}(p_{1}(s))^{2}+\frac{1-a_{1}}{n},
  \end{aligned}\\
p_{i}(s+1)=a_{i}(p_{i}(s))^{2}+\frac{1-a_{i}}{n}, \quad i\in\Vp\setminus\{1\}\\
p_{i}(s+1)=C_{1i}\frac{a_{1}}{1-a_{1}}(1-p_{1}(s))p_{1}(s)+\frac{1}{n}. \quad i\in\Vf
\end{cases}
\end{equation}
}

% \emph{Proof of \ref{T4-1}:} For $p_{j}(0)\in [0,1]$ with $j\in\Vp\setminus\{1\}$, we have $p_{j}(s)\in [0,1]$ for all $s>0$. For any $p_{1}(s)\in [0,1/(2a_{1})]$, by \eqref{equ:17} and \eqref{e20} we obtain $p_{1}(s+1)>0$ and
% \begin{align*}
% p_{1}(s+1)\leq\frac{1}{4a_{1}}+(1-a_{1})(\frac{1}{n}+\frac{1}{4}\sum_{j\in\Vp\setminus\{1\}}\frac{a_{j}}{1-a_{j}})\leq \frac{1}{2a_{1}}.
% \end{align*}
% For $j\in\Vf$, since $p_{1}(s)(1-p_{1}(s))\in [-\frac{\mid 2a_{1}-1\mid}{4a_{1}^{2}},\frac{1}{4}]$ for $p_{1}(s)\in [0,1/(2a_{1})]$, we have 
% \begin{align*}
%   \frac{1}{n}-\frac{\mid 2a_{1}-1\mid}{4a_{1}(1-a_{1})}\leq p_{j}(s+1)\leq\frac{1}{n}+\frac{a_{1}}{4(1-a_{1})}.
%   \end{align*}
% Therefore, $\varGamma_{n}(\mu,\nu)$ is positively invariant.

% $C_{1j}=0$ for all $j\in\Vp\setminus\{1\}$. Thus, if $n=2$, we have $\Vp=\{1\}$, as
% $C$ is row-stochastic and zero-diagonal matrix. As a result, 
% \begin{equation*}
% \begin{cases}
% p_{1}(s+1)=a_{1}(p_{1}(s))^{2}+\frac{1-a_{1}}{2}\\
% p_{2}(s+1)=\frac{a_{1}}{1-a_{1}}(1-p_{1}(s))p_{1}(s)+\frac{1}{2}.
% \end{cases}
% \end{equation*}
% By the proof of Theorem \ref{T3}\ref{T3-1}, for any $p_{1}(0)\in (-\infty,\xi_1)$ and $p_{2}(0)\in \mathbb{R}$, $p(s)$ converges to 
% {\small\begin{align*}
%   p_{1}^{*}=\frac{1}{a_{1}}-\xi_1, \quad p_{2}^{*}=\frac{1}{2}+\frac{a_{1}}{1-a_{1}}(1-p_{1}^{*})p_{1}^{*}, 
% \end{align*}}
% where $p_{2}^{*}=1-p_{1}^{*}$ follows from \eqref{equ:34}. 
\emph{Proof of \ref{T4-2}:} Note that since $C$ is row-stochastic and zero-diagonal, this is equivalent to the case of fully stubborn center node when $n=2$. We assume that $n\geq 3$.

Let $\mathcal{Z}=\setdef{z\in\mathbb{R}^{n}}{|z_1|<1/(2a_1), z_{i}\in [0,1], \forall i\in\Vp\setminus\{1\}}$. For any $p(s)\in\mathcal{Z}$, we have $0<p_{1}(s+1)$, and 
\begin{align*}
  p_{1}(s+1)< \frac{1}{4a_{1}}+(1-a_{1})(\frac{1}{4}\sum_{j\in\Vp\setminus\{1\}}\frac{a_{j}}{1\!-\!a_{j}}+\frac{1}{n})\leq \frac{1}{2a_{1}},
\end{align*}
where the last inequality is implied by \eqref{equ:17}. For $i\in\Vp\setminus\{1\}$, we obtain 
\begin{align*}
  0<p_{i}(s+1)\leq a_{i}+\frac{1-a_{i}}{n}<1, \forall i\in\Vp\setminus\{1\}.
\end{align*}
Hence, $\mathcal{Z}$ is positively invariant. 

By \eqref{e20}, the subsystem consisting of all $i\in\Vp$ is closed; we therefore  
focus only on $i\in\Vp$. Suppose that $\Vp=\until{m}$ with $m<n$ and $z(s)=(p_{1}(s),\dots,p_{m}(s))^{\top}$. Let $\hat{\mathcal{Z}}=\varGamma_{m}(\mathbf{0}_{m},1/(2a_{1})\mathbf{e}_{1}+\sum_{j=2}^{m}\mathbf{e}_{j})$. Then, the invariance of $\mathcal{Z}$ implies that $z(s)\in\hat{\mathcal{Z}}$ for any $s\geq 1$ and $p(0)\in\mathcal{Z}$.
For $\eta\in (0,1)$, let $\hat{\mathcal{Z}}_{\eta}=\varGamma_{m}(\sum_{j\in\Vp}\frac{1-a_{j}}{n}\mathbf{e}_{j},\frac{1-\eta}{2a_{1}}\mathbf{e}_{1}+\frac{1}{n}\sum_{j=2}^{m}\mathbf{e}_{j})$.
By the proof of Theorem \ref{T3}\ref{T3-1}, for $j\in\Vp\setminus\{1\}$ and any $z_{j}(0)\in[0,1]$, $z_{j}(s)$ converges to $p_{j}^{*}$ given by \eqref{equ:32}. As a result, there exists $T(p(0))>0$ such that for all $s>T(p(0))$, $z_{j}(s-1)\in ((1-a_{j})/n,1/n)$. Let $\theta=1/4-(n-1)/n^{2}$. Since $n>2$, we have $\theta>0$, and $z_{j}(s-1)(1-z_{j}(s-1))<(n-1)/n^{2}=1/4-\theta$ for all $j\in\Vp\setminus\{1\}$. Thus, for $s>T(p(0))$, $p(0)\in\mathcal{Z}$ and $0<\eta\leq 2\theta(n-4a_{1}(1-a_{1}))/n<1/2$, 
\begin{multline*}
 \frac{1-a_{1}}{n}<z_{1}(s)<\frac{1}{4a_{1}}+\frac{1-a_{1}}{n}\\+(1-a_{1})\frac{1-4\theta}{4}\sum_{j\in\Vp\setminus\{1\}}\frac{a_{j}}{1\!-\!a_{j}}\leq\frac{1\!-\!\eta}{2a_{1}}, 
\end{multline*}
where the last inequality follows from \eqref{equ:17}. Therefore, for any $p(0)\in\mathcal{Z}$, there exists $T(p(0))>0$ and $\eta<1/2$ such that $z(s)\in\interior\hat{\mathcal{Z}}_{\eta}$ for all $s>T(p(0))$.
Define $\map{F}{\hat{\mathcal{Z}}_{\eta}}{\hat{\mathcal{Z}}_{\eta}}$ such that $z(s+1)=F(z(s))$. By \eqref{e20}, $F$ is differentiable on $\interior\hat{\mathcal{Z}}_{\eta}$ and continuous on $\hat{\mathcal{Z}}_{\eta}$. For an infinitesimal displacement $\delta z(s)$, we have $\delta z(s+1)=\frac{\partial F}{\partial z}(z(s))\delta z(s)$. Let $A_{m}=\diag(a_{1},\dots,a_{m})$ and $\delta \tilde{z}(s)=(I_{n}-A_{m})^{-1}\delta z(s)$. Then, $\delta \tilde{z}(s+1)=G(z(s))\delta \tilde{z}(s)$ with 
\begin{align}\label{equ:39}
  G(z)=&(I_{m}-A_{m})^{-1}\frac{\partial F}{\partial z}(z)(I_{m}-A_{m})\notag\\
  =&\begin{bmatrix}
    2a_{1}z_{1} & a_{2}(1-2z_{2}) & \dots & a_{m}(1-2z_{m})\\
    0 & 2a_{2}z_{2} & \dots & 0\\
    \vdots & \vdots & \vdots & \vdots\\
    0 & 0 & \dots & 2a_{m}z_{m}
 \end{bmatrix}.  
\end{align}

For $z\in\hat{\mathcal{Z}}_{\eta}$, since $1/2>z_{j}>0$ for all $j>1$ and $0<z_{1}\leq\frac{1-\eta}{2a_{1}}$ with $0<\eta<1/2$, we have $||G(p)||_{1}\leq \varepsilon$ with $\varepsilon=\max\{1-\eta,\max_{j\in\Vp\setminus\{1\}}a_{j}\}<1$. Thus, $\hat{\mathcal{Z}}_{\eta}$ is a generalized contraction region of $z(s+1)=F(z(s))$, and all trajectories of $z(s+1)=F(z(s))$ starting from $\hat{\mathcal{Z}}_{\eta}$ exponentially converge to a unique eqiulibrium $z^{*}$ satisfying $z_{j}^{*}=p_{j}^{*}$ for all $1<j\leq m$. As a result, for any $p(0)\in\mathcal{Z}$, $p(s)$ converges to a unique equilibrium $p^{*}$. 

It remains to show $p^{*}\in\interior\Delta_{n}$. By \eqref{e20} and \eqref{equ:34}, we have 
\begin{equation}\label{e21}
p_{1}^{*}=a_{1}(p_{1}^{*})^{2}+\frac{1-a_{1}}{n}+\frac{1-a_{1}}{n}\sum_{j\in\Vp\setminus\{1\}}(1-np_{j}^{*}),
\end{equation}
which, in $\mathcal{Z}$, is solved by
\begin{align}\label{equ:21.0}
p_{1}^{*}=\frac{1-\sqrt{1-\frac{4a_{1}(1-a_{1})}{n}(\mid\Vp\mid-n\sum_{j\in\Vp\setminus\{1\}}p_{j}^{*})}}{2a_{1}}.
\end{align}
On the other hand, \eqref{e21} implies 
\begin{multline*}
\frac{a_{1}(1-p_{1}^{*})}{1-a_{1}}p_{1}^{*}=\frac{1-a_{1}p_{1}^{*}}{1-a_{1}}p_{1}^{*}-p_{1}^{*}=\\\frac{1}{n}+\frac{1}{n}\sum_{j\in\Vp\setminus\{1\}}(1-np_{j}^{*})-p_{1}^{*}=\frac{\mid \Vp\mid}{n}-\sum_{j\in\Vp}p_{j}^{*}.
\end{multline*}
Hence, for $j\in\Vf$, we have 
\begin{align*}
p_{j}^{*}=\frac{1}{n}+C_{1j}\frac{a_{1}(1-p_{1}^{*})}{1-a_{1}}p_{1}^{*}
=\frac{1}{n}\!+\!C_{1j}(\frac{\mid\Vp\mid}{n}-\sum_{j\in\Vp}p_{j}^{*}).
\end{align*}
Consequently, $p^{*}\in\interior\Delta_{n}\intersection\mathcal{Z}$, which completes the proof.  

\emph{Proof of \ref{T4-3}:} Suppose that $\Vp=\until{m}$ and define $\map{F}{\mathcal{Z}}{\mathcal{Z}}$ such that $p(s+1)=F(p(s))$. 
Then, $F$ is differentiable on $\interior\mathcal{Z}$. Let $H(p)=(I_{n}-A)^{-1}\frac{\partial F}{\partial p}(p)(I_{n}-A)$, where $\frac{\partial F}{\partial p}(p)$ is the Jocobian of $F$. Then, we have $\rho(\frac{\partial F}{\partial p}(p))=\rho(H(p))$ and $H(p)=[\hat{H}(p) \ \  \mathbf{0}_{n\times (n-m)}]$ with
\begin{equation*}
  \hat{H}(p)=\begin{bmatrix}
    G(z)\\
    \begin{smallmatrix}
   C_{1m+1}a_{1}(1-2p_{1}) & 0 & \dots & 0\\
    \vdots & \vdots & \vdots & \vdots \\
    C_{1n}a_{1}(1-2p_{1}) & 0 & \dots & 0\\
    \end{smallmatrix}
 \end{bmatrix},
\end{equation*}
where $G(z)$ is given by \eqref{equ:39} with $z_i=p_i$ for $i\in\Vp$. Thus, 
\begin{multline*}
    \norm{H(p^{*})}_{1}=\norm{\hat{H}(p^{*})}_{1}=\max_{j\in\Vp}(2a_{j}p_{j}^{*}+a_{j}\mid1-2p^{*}_{j}\mid).
\end{multline*}
Since $p^{*}_{j}<1/2$ for all $m\geq j>1$, we obtain $\norm{H(p^{*})}_{1}=\amax<1$ when $p_{1}^{*}\leq 1/2$. 
Otherwise,~\eqref{equ:21.0} yields 
\begin{align*}
 &2a_{1}p_{1}^{*}+a_{1}\mid1-2p^{*}_{1}\mid=4a_{1}p_{1}^{*}-a_{1}\\=&2-a_{1}-2\sqrt{1-\frac{4a_{1}(1-a_{1})}{n}(m-n\sum_{j\in\Vp\setminus\{1\}}p_{j}^{*})}\\
 <&2-a_{1}-2\sqrt{1-\frac{4a_{1}(1-a_{1})}{n}(1+\sum_{j\in\Vp\setminus\{1\}}a_{j})}<1
\end{align*}
where the first inequality is implies by $p_{j}^{*}>(1-a_{j})/n$ for all $j\in\Vp\setminus\{1\}$, and the last inequality follows from \eqref{equ:17} and Lemma~\ref{L4}. In conclusion, we have $\rho(\frac{\partial F}{\partial p}(p^{*}))<1$, which means that $p^{*}$ is exponentially stable. 

\section{Proof of Theorem \ref{T5}}\label{app:T5} 

\emph{Proof of \ref{T5-1}:} Let $\chi (s)=\mathbf{1}^{\top}_{n}p(s)$, by \eqref{ehomo} we have 
\begin{align*}
&\chi  (s+1)\\
=&a\sum_{i=1}^{n}(p_{i}(s))^{2}+a\sum_{j=1}^{n}p_{j}(s)(1-p_{j}(s))\sum_{i=1}^{n}C_{ji}+(1-a)\\
=& a \chi (s)+1-a.
\end{align*}
As a result, $\chi (s)\to 1$ as $s\to\infty$ for any $\chi(0)\in\mathbb{R}$. 
For any $s\geq 0$ and $p(s)\in\Delta_{n}$, we have $\chi (s+1)=a \chi (s)+1-a=1$. Moreover, \eqref{ehomo} implies $p_{i}(s+1)\geq (1-a)/n$ and 
\begin{multline*}
p_{i}(s+1)< ap_{i}(s)+a\sum_{j\neq i}p_{j}(s)+\frac{1-a}{n}=a+\frac{1-a}{n}<1.
\end{multline*}
Thus, $\Delta_{n}$ is positively invariant.

\emph{Proof of \ref{T5-2}:} By the proof of \ref{T5-1}, $0<p_{i}(s)<a+(1-a)/n$ for all $i$ and $s>0$. Therefore, for any $s>0$, by \eqref{ehomo} we have 
\begin{align*}
 &\norm{p(s+1)-p^{*}}_{1}\\=&a\sum_{i=1}^{n}\mid(p_{i}(s)+p_{i}^{*})(p_{i}(s)-p_{i}^{*})\\&+\sum_{j=1}^{n}C_{ji}(p_{j}(s)-p_{j}^{*})(1-p_{j}(s)-p_{j}^{*})\mid \\
 \leq & a\sum_{i=1}^{n}(p_{i}(s)+p_{i}^{*}+\mid 1-p_{i}(s)-p_{i}^{*}\mid)\mid p_{i}(s)-p_{i}^{*}\mid\\
 \leq & a \max_{i}(p_{i}(s)+p_{i}^{*}+\mid 1-p_{i}(s)-p_{i}^{*}\mid)\norm{p(s)-p^{*}}_{1}\\
 <& a (4(a+\frac{1-a}{n})-1)\norm{p(s)-p^{*}}_{1}< \norm{p(s)-p^{*}}_{1},
\end{align*}
where the last inequality follows from \eqref{equ:19}. Hence, $p(s)$ exponentially converges to $p^{*}$.

\bibliographystyle{plainurl}
\bibliography{alias,Ye}

\end{document}